\DeclareMathOperator*{\argmax}{arg\,max}
\DeclareMathOperator*{\argmin}{arg\,min}
\theoremstyle{plain}
\newtheorem{theorem}	 			{Theorem}
\newtheorem{lemma}		[theorem]	{Lemma}
\newtheorem{corollary}		[theorem]	{Corollary} 
\theoremstyle{definition}
\newtheorem{definition}[theorem]{Definition} 
\theoremstyle{remark}
\newcommand{\val}{\mathrm{val}}
\newcommand{\tra}{^T}
\newcommand{\NN}{\mathbb{N}}
\newcommand{\RR}{\mathbb{R}}
\newcommand{\RRp}{\mathbb{R}_{\ge0}}
\newcommand{\calF}{\ensuremath{\mathcal{F}}}
\newcommand{\calI}{\ensuremath{\mathcal{I}}}
\newcommand{\calL}{\ensuremath{\mathcal{L}}}
\newcommand{\calP}{\ensuremath{\mathcal{P}}}
\newcommand{\calQ}{\ensuremath{\mathcal{Q}}}
\newcommand{\calS}{\ensuremath{\mathcal{S}}}
\newcommand{\eps}{\varepsilon}
\newcommand{\Ex}[2][]{\text{\rm\bf E}_{#1}\hspace{-0.03cm}\left[#2\right]}
\newcommand{\dist}{\mathrm{dist}}
\title{Smoothed Analysis of Pareto Curves\\ in Multiobjective Optimization}
\author{Heiko R{\"o}glin\thanks{Department of Computer Science, University of Bonn, Bonn, Germany, \url{roeglin@cs.uni-bonn.de}}}
\date{}
\begin{document}

\maketitle

\begin{abstract}
In a multiobjective optimization problem a solution is called Pareto-optimal if no criterion can be improved without deteriorating at least one of the other criteria. Computing the set of all Pareto-optimal solutions is a common task in multiobjective optimization to filter out unreasonable trade-offs.

For most problems the number of Pareto-optimal solutions increases only moderately with the input size in applications. However, for virtually every multiobjective optimization problem there exist worst-case instances with an exponential number of Pareto-optimal solutions. In order to explain this discrepancy, we analyze a large class of multiobjective optimization problems in the model of smoothed analysis and prove a polynomial bound on the expected number of Pareto-optimal solutions.

We also present algorithms for computing the set of Pareto-optimal solutions for different optimization problems and discuss related results on the smoothed complexity of optimization problems.
\end{abstract}

\section{Algorithms for Computing Pareto Curves}

Suppose you would like to book a flight to your favorite conference. Your decision is then probably guided by different factors, like the price, the number of stops, and the arrival time. Usually you won't find a flight that is optimal in every respect and you have to choose the best trade-off. This is characteristic for many decisions faced every day by people, companies, and other economic entities.

The notion of ``best trade-off'' is hard to formalize and often there is no consensus on how different criteria should be traded off against each other. However, there is little disagreement that in a reasonable outcome no criterion can be improved without deteriorating at least one of the other criteria. Outcomes with this property are called \emph{Pareto-optimal} and they play a crucial role in multi-criteria decision making as they help to filter out unreasonable solutions. In this section we discuss algorithms for computing the set of Pareto-optimal solutions for different problems.

\subsection{Knapsack Problem}
The \emph{knapsack problem} is a well-known NP-hard
optimization problem. An instance of this problem consists of
a set of items, each with a profit and a weight, and a capacity.
The goal is to find a subset of the items that maximizes the total profit among all
subsets whose total weight does not exceed the capacity.
Let~$p=(p_1,\ldots,p_n)\tra\in\RRp^n$ and~$w=(w_1,\ldots,w_n)\tra\in\RRp^n$ denote the profits and weights,
respectively, and let~$W\in\RRp$ denote the capacity.
Formally the knapsack problem can be stated as follows:
\begin{align*}
   \text{maximize \hspace{0.3cm}} & p\tra x = p_1x_1+\cdots+p_nx_n\\
   \text{subject to \hspace{0.3cm}} & w\tra x = w_1x_1+\cdots+w_nx_n\le W,\\
   & \text{and~} x=(x_1,\ldots,x_n)\tra\in\{0,1\}^n.
\end{align*}

The knapsack problem has attracted a great deal of
attention, both in theory and in practice. Theoreticians are interested
in the knapsack problem because of its simple structure; it can be
expressed as a binary program with one linear objective function and
one linear constraint. On the other hand, knapsack-like problems
often occur in applications, and practitioners have
developed numerous heuristics for solving them. These heuristics work
very well on random and real-world instances 
and they usually find optimal solutions quickly even for very large instances.

In the following, we assume that an arbitrary instance~$\calI$ of the knapsack problem is given.
We use the term \emph{solution} to refer to a vector
$x\in\{0,1\}^n$, and we say that a solution is \emph{feasible} if~$w\tra x\le W$.
We say that a solution $x$ \emph{contains item~$i$} if $x_i=1$
and that it \emph{does not contain item~$i$} if~$x_i=0$.

One naive approach for solving the knapsack problem is to enumerate
all feasible solutions and to select the one with maximum
profit. This approach is not efficient as there are
typically exponentially many feasible solutions. In order to decrease
the number of solutions that have to be considered, we view the knapsack problem as a bicriteria optimization problem and restrict the enumeration to only the Pareto-optimal solutions. 

\begin{definition}\label{def:ParetoOptimal}
A solution~$y$ \emph{dominates} a solution~$x$ if~$p\tra y\ge p\tra x$
and~$w\tra y\le w\tra x$, with at least one of these inequalities being strict.
A solution $x$ is called \emph{Pareto-optimal} if it is not dominated by any other solution.
The \emph{Pareto set} or \emph{Pareto curve} is the set of all Pareto-optimal solutions.
\end{definition}

Once the Pareto set is known, the given instance of the knapsack problem can be solved optimally in time linear in the size of this set due to the following observation.
\begin{lemma}
There always exists an optimal solution that is also Pareto-optimal.
\end{lemma}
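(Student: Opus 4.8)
The plan is to start from an arbitrary optimal solution and then pass to a suitably chosen representative among all optimal solutions that cannot be dominated. First, let $\opt$ denote the maximum of $p\tra x$ over all feasible solutions $x\in\{0,1\}^n$. This maximum is attained: the feasible set is finite and nonempty, since the all-zero vector has weight $0\le W$ and is therefore feasible. Among all optimal solutions, that is, feasible solutions $x$ with $p\tra x=\opt$, pick one, say $x^*$, that minimizes the total weight $w\tra x$; again this is well defined because $\{0,1\}^n$ is finite, so every subset of it has a minimum with respect to the linear functional $w\tra\cdot$.

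Next I would argue that $x^*$ is Pareto-optimal, by contradiction. Suppose some solution $y$ dominates $x^*$, so $p\tra y\ge p\tra x^*$ and $w\tra y\le w\tra x^*$, with at least one of the two inequalities strict. From $w\tra y\le w\tra x^*\le W$ we conclude that $y$ is feasible, and then optimality of $x^*$ forces $p\tra y\le\opt=p\tra x^*$, hence $p\tra y=p\tra x^*$. Thus $y$ is itself an optimal solution, and by the minimality in the choice of $x^*$ we get $w\tra y\ge w\tra x^*$, hence $w\tra y=w\tra x^*$ as well. But now both defining inequalities of domination hold with equality, contradicting the requirement that at least one be strict. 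Therefore no solution dominates $x^*$, i.e.\ $x^*$ is Pareto-optimal.

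There is essentially no obstacle here; the only point requiring a moment's care is the legitimacy of the two selection steps (first maximizing profit, then minimizing weight over the maximizers), which is immediate from finiteness of $\{0,1\}^n$. An equivalent phrasing avoids the explicit tie-break altogether: start with any optimal solution and, as long as the current solution is dominated, replace it by a dominating one. The feasibility and optimality deductions above show that a solution dominating an optimal solution is again feasible and optimal, so along this process the profit stays equal to $\opt$ while the weight strictly decreases at every step. Since there are only finitely many solutions, the process cannot continue forever and must terminate at a solution that is Pareto-optimal and still optimal.
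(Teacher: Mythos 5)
Your proof is correct and rests on the same idea as the paper's: a solution dominating an optimal one must itself be feasible and optimal with strictly smaller weight, so finiteness forces the process to bottom out. You package this as a one-shot extremal choice (minimize weight among profit-maximizers), whereas the paper phrases it as an iterative replacement that must terminate; you note this equivalence yourself, and either phrasing is fine.
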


\begin{proof}

Take an arbitrary optimal solution~$x$ and assume that it is not Pareto-optimal. There cannot be a solution~$y$ with $p\tra y > p\tra x$ and $w\tra y \le w\tra x$ because then~$y$ would be a better solution than~$x$. Hence, if $x$ is not Pareto-optimal then it is dominated by a solution~$y$ with $p\tra y = p\tra x$ and $w\tra y < w\tra x$. Then either~$y$ is Pareto-optimal or we repeat the argument to find a solution~$z$ with $p\tra z = p\tra y$ and $w\tra z < w\tra y$. This construction terminates after a finite number of iterations with an optimal solution that is also Pareto-optimal.
\end{proof}

We denote the Pareto set by~$\calP\subseteq\{0,1\}^n$. It can happen that there are two or more Pareto-optimal solutions with the same profit and the same weight. Then~$\calP$ is assumed to contain only one of these solutions, which can be chosen arbitrarily. Due to the previous lemma the solution
\[
   x^{\star} = \argmax_{x\in\calP} \{p\tra x \mid w\tra x \le W\},
\]
is an optimal solution of the given instance of the knapsack problem.

In the following we present an algorithm invented by \citet{NU69} to compute the Pareto set of a given instance of the knapsack problem. We will refer to this algorithm, which is based on dynamic programming, as the \emph{Nemhauser-Ullmann algorithm}. For each~$i\in\{0,1,\ldots,n\}$ it computes the Pareto set~$\calP_i$ of the restricted instance~$\calI_i$ that contains only the first~$i$ items of the given instance~$\calI$. Then~$\calP_n=\calP$ is the set we are looking for.
Let
\[
   \calS_i=\{x\in\{0,1\}^n\mid x_{i+1}=\ldots=x_n = 0\}
\]
denote the set of solutions that do not contain the items~$i+1,\ldots,n$.
Formally, solutions of the instance~$\calI_i$ are binary vectors of length~$i$. We will, however, represent them as binary vectors of length~$n$ from~$\calS_i$.
For a solution~$x\in\{0,1\}^n$ and an item~$i\in\{1,\ldots,n\}$ we denote
by~$x^{+i}$ the solution that is obtained by adding item~$i$ to solution~$x$:
\[
   x^{+i}_j = \begin{cases}
   	  x_j & \text{if $j\neq i$,}\\
   	  1   & \text{if $j=i$.}
   \end{cases}
\]
Furthermore, for a set~$\calS\subseteq\{0,1\}^n$ of solutions let
\[
   \calS^{+i} = \{y\in\{0,1\}^n \mid \exists x\in\calS: y = x^{+i}\}.
\]
If for some~$i\in\{1,\ldots,n\}$, the set~$\calP_{i-1}$ is known then the set~$\calP_i$ can be computed with the help of the following lemma.
For the lemma we assume a consistent tie-breaking between solutions that have the same profit and the same weight. In particular, if $p\tra x = p\tra y$ and $w\tra x = w\tra y$ for two solutions~$x$ and~$y$ and the tie-breaking favors~$x$ over~$y$ then it should also favor~$x^{+i}$ over $y^{+i}$ for any~$i$.
\begin{lemma}\label{lem:calPi}
For every~$i\in\{1,\ldots,n\}$, the set $\calP_i$ is a subset of $\calP_{i-1} \cup \calP_{i-1}^{+i}$.
\end{lemma}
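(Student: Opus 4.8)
The plan is to fix $i\in\{1,\ldots,n\}$, take an arbitrary $x\in\calP_i$, and show directly that $x\in\calP_{i-1}$ or $x\in\calP_{i-1}^{+i}$. The starting point is the decomposition $\calS_i=\calS_{i-1}\cup\calS_{i-1}^{+i}$: a solution that does not contain items $i+1,\ldots,n$ either also does not contain item~$i$, and then it lies in $\calS_{i-1}$, or it does contain item~$i$, and then deleting item~$i$ gives a solution $z\in\calS_{i-1}$ with $x=z^{+i}$. Since $\calP_i$ consists of solutions drawn from $\calS_i$, every $x\in\calP_i$ falls into one of these two cases, and I would treat them separately.

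In the case $x_i=0$, so $x\in\calS_{i-1}$, I would argue $x\in\calP_{i-1}$ by contraposition. If $x\notin\calP_{i-1}$, then inside the instance $\calI_{i-1}$ there is a witness $y\in\calS_{i-1}$, $y\neq x$, that either strictly dominates $x$ or has the same profit and weight as $x$ and is favored by the tie-breaking. Because $\calS_{i-1}\subseteq\calS_i$ the very same $y$ is a feasible competitor in $\calI_i$, and since the tie-breaking is assumed consistent across $i-1$ and $i$, this same witness certifies $x\notin\calP_i$, a contradiction.

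In the case $x_i=1$, write $x=z^{+i}$ with $z\in\calS_{i-1}$ obtained by removing item~$i$; I claim $z\in\calP_{i-1}$, which yields $x\in\calP_{i-1}^{+i}$. Again by contraposition, suppose $z$ is beaten in $\calI_{i-1}$ by some $y\in\calS_{i-1}$, $y\neq z$: either $y$ dominates $z$, or $y$ ties $z$ in both coordinates and wins the tie-break. Adding item~$i$ shifts both the profit and the weight of $y$ and of $z$ by exactly $p_i$ and $w_i$, so $y^{+i}$ stands to $x=z^{+i}$ exactly as $y$ stood to $z$: it dominates $x$, or it ties $x$ and, by the consistency hypothesis stated just before the lemma (favoring $y$ over $z$ implies favoring $y^{+i}$ over $z^{+i}$), still wins the tie-break. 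Since $y^{+i}\in\calS_{i-1}^{+i}\subseteq\calS_i$ and $y^{+i}\neq x$ (as $y$ and $z$ differ in some coordinate $j<i$, which is untouched by the $+i$-operation), this contradicts $x\in\calP_i$.

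Combining the two cases gives the inclusion $\calP_i\subseteq\calP_{i-1}\cup\calP_{i-1}^{+i}$. The arithmetic of domination under the translation by item~$i$ (i.e.\ $p\tra y^{+i}=p\tra y+p_i$ and $w\tra y^{+i}=w\tra y+w_i$) is routine; the step that needs genuine care is the bookkeeping for equal-profit-equal-weight solutions, where one must invoke exactly the stated consistency of the tie-breaking so that witnesses of non-optimality transfer cleanly in both directions between the restricted instance $\calI_{i-1}$ and the extended instance $\calI_i$.
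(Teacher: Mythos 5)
Your proof follows exactly the same route as the paper's: distinguish $x_i=0$ from $x_i=1$, and in each case argue by contradiction/contraposition that if $x$ (respectively its truncation $z$) failed to lie in $\calP_{i-1}$, then a witness from $\calS_{i-1}$ (respectively its $+i$-shift) would certify $x\notin\calP_i$. You are in fact slightly more careful than the paper's own text, which only considers a strictly dominating witness and silently elides the equal-profit-equal-weight case settled by tie-breaking, whereas you invoke the stated consistency assumption explicitly to transfer the tie-break witness across the $+i$ operation.
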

\begin{proof}
Let~$x\in\calP_i$. Based on the value of~$x_i$ we distinguish two cases.

First we consider the case~$x_i=0$. We claim that in this case~$x\in\calP_{i-1}$.
Assume for contradiction that~$x\notin\calP_{i-1}$. Then there exists a
solution~$y\in\calP_{i-1}\subseteq\calS_{i-1}\subseteq\calS_i$ that dominates~$x$. Since~$y\in\calS_i$, solution~$x$ cannot be
Pareto-optimal among the solutions in~$\calS_i$. Hence, $x\notin\calP_i$, contradicting
the choice of~$x$.

Now we consider the case~$x_i=1$. We claim that in this case~$x\in\calP_{i-1}^{+i}$.
Since~$x\in\calS_i$ and~$x_i=1$, there exists a solution~$y\in\calS_{i-1}$ such that~$x=y^{+i}$.
We need to show that~$y\in\calP_{i-1}$. Assume for contradiction that there exists a solution~$z\in\calP_{i-1}$
that dominates~$y$. Then~$p\tra z \ge p\tra y$ and~$w\tra z \le w\tra y$ and one of these
inequalities is strict. By adding item~$i$ to the solutions~$y$ and~$z$, we obtain~$p\tra z^{+i} \ge p\tra y^{+i}$ and~$w\tra z^{+i} \le w\tra y^{+i}$, with one of these
inequalities being strict. Hence, the solution~$z^{+i}$ dominates the solution~$x=y^{+i}$. Since~$z^{+i}\in\calS_i$,
this implies~$x\notin\calP_i$, contradicting the choice of~$x$.
\end{proof}

Due to the previous lemma, the Pareto set~$\calP_i$ can be computed easily if the Pareto set~$\calP_{i-1}$ is already known.
For this one only needs to compute the set~$\calP_{i-1} \cup \calP_{i-1}^{+i}$ and remove solutions from this set that
are dominated by other solutions from this set. Using additionally that~$\calP_0=\calS_0=\{0^n\}$, we obtain the following
algorithm to solve the knapsack problem optimally (see Figure~\ref{fig:NUAlgorithm} for an illustration).

\begin{algorithm}[H]
\caption{Nemhauser-Ullmann algorithm}
\begin{algorithmic}[1]
\State $\calP_0 := \{0^n\}$;
  \For{$i=1,\ldots,n$}
	  \State $\calQ_i := \calP_{i-1} \cup \calP_{i-1}^{+i}$;\label{line:NUAlgo3}
	  \State $\calP_i:=\{x\in \calQ_i\mid\not\exists y\in\calQ_i \colon y \text{ dominates } x\}$;\label{line:NUAlgo4}
  \EndFor
\State\Return $x^{\star} := \argmax_{x\in\calP_n} \{p\tra x \mid w\tra x \le W\}$;\label{line:NUAlgo5}
\end{algorithmic}
\label{algorithm:NUAlgorithm}
\end{algorithm}

In Line~\ref{line:NUAlgo4} a tie-breaking is assumed so that~$\calP_i$ does never contain two solutions with identical profits and weights.

Observe that all steps of the Nemhauser-Ullmann algorithm except for Line~\ref{line:NUAlgo5} are independent of the capacity~$W$. In order to speed up the algorithm, one could remove solutions with weights larger than~$W$ already from~$\calQ_i$ in Line~\ref{line:NUAlgo3}.

\begin{figure}[t]
\centering
  \epsfig{file=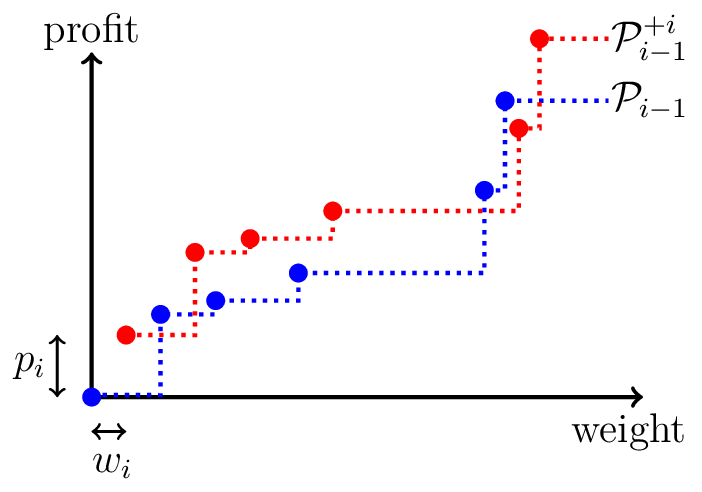,width=.48\textwidth}
 \hspace{0.3cm}
  \epsfig{file=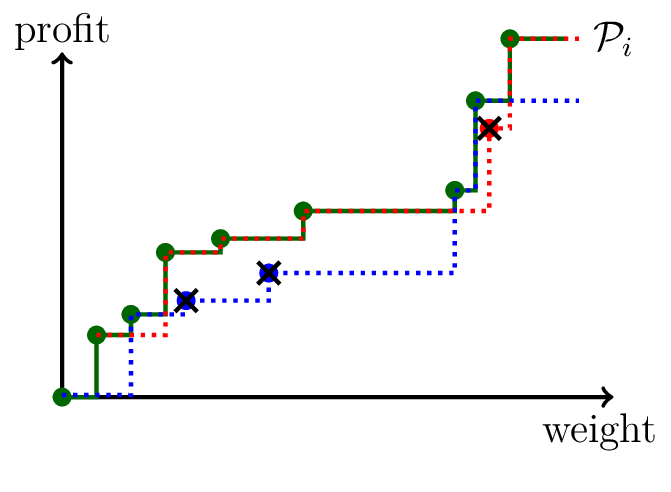,width=.48\textwidth}
\caption{Illustration of one iteration of the for-loop of the Nemhauser-Ullmann algorithm: The set~$\calP_{i-1}^{+i}$ is a copy of the set~$\calP_{i-1}$ that is shifted by~$(w_i,p_i)$. The set~$\calP_{i}$ is obtained by removing dominated solutions.}
\label{fig:NUAlgorithm}
\end{figure}

We analyze the running time of the Nemhauser-Ullmann algorithm using the model of a unit-cost RAM.
In this model, arithmetic operations like adding and comparing two numbers can be performed in constant time regardless
of their bit-lengths. We use this model for the sake of simplicity and in order to keep
the focus on the important details of the running time analysis.

\begin{theorem}\label{thm:NemhauserUllmannRunningTime}
The Nemhauser-Ullmann algorithm solves the knapsack problem optimally.
There exists an implementation with running time~$\Theta(\sum_{i=0}^{n-1}|\calP_{i}|)$.
\end{theorem}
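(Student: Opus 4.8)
The plan is to prove correctness first and then exhibit an implementation meeting the stated time bound. Correctness follows from the lemmas already established: by induction on~$i$, the algorithm maintains $\calP_i$ equal to the true Pareto set of~$\calI_i$. The base case $\calP_0=\{0^n\}$ is immediate since $\calS_0=\{0^n\}$. For the inductive step, Lemma~\ref{lem:calPi} gives $\calP_i\subseteq\calP_{i-1}\cup\calP_{i-1}^{+i}=\calQ_i$, so the true Pareto set is contained in~$\calQ_i$; and Line~\ref{line:NUAlgo4} removes exactly those elements of~$\calQ_i$ that are dominated within~$\calQ_i$. I would argue that an element of~$\calQ_i$ dominated by \emph{some} solution in~$\calS_i$ is in fact dominated by one in~$\calQ_i$ (any dominator can be replaced by an element of the true Pareto set of~$\calI_i$, which lies in~$\calQ_i$), so the surviving set is precisely the Pareto set of~$\calI_i$. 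Together with the earlier lemma that an optimal solution can be chosen Pareto-optimal, Line~\ref{line:NUAlgo5} then returns an optimal solution, establishing the first sentence.

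**The implementation.**

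For the running-time bound, the key observation is that each Pareto set should be stored as an array sorted by weight (equivalently by profit, since along a Pareto curve profit is monotone in weight). Given $\calP_{i-1}$ sorted this way, the shifted copy $\calP_{i-1}^{+i}$ is obtained by adding $(w_i,p_i)$ to every entry, and it is automatically sorted as well. Forming $\calQ_i$ then amounts to merging two sorted lists, in time $\Theta(|\calP_{i-1}|)$. Scanning the merged list once from lightest to heaviest, one keeps a running maximum of the profit seen so far and discards any solution whose profit does not strictly exceed it (with the consistent tie-breaking handling equal-profit-equal-weight collisions); this yields $\calP_i$, still sorted, in time $\Theta(|\calP_{i-1}|)$. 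Hence iteration~$i$ costs $\Theta(|\calP_{i-1}|)$, and the total is $\Theta\bigl(\sum_{i=1}^{n}|\calP_{i-1}|\bigr)=\Theta\bigl(\sum_{i=0}^{n-1}|\calP_i|\bigr)$; the final $\argmax$ in Line~\ref{line:NUAlgo5} costs $O(|\calP_n|)$, which is absorbed (and one can note $|\calP_{n-1}|=\Theta(|\calP_n|)$ up to a constant, or simply observe the extra term does not affect the $\Theta$).

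**The main obstacle.**

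I expect the delicate point to be verifying that the merge-and-sweep really produces $\calP_i$ and not merely a superset, i.e. that sweeping $\calQ_i$ for internal domination suffices rather than needing to compare against all of~$\calS_i$ — this is exactly where Lemma~\ref{lem:calPi} is used, and it must be invoked carefully together with the tie-breaking convention to ensure no spurious duplicates survive. A secondary subtlety is the matching lower bound in~$\Theta(\cdot)$: one must argue the implementation \emph{must} spend $\Omega(|\calP_{i-1}|)$ time in iteration~$i$, which is clear since it at least writes down a set of size $\Omega(|\calP_{i-1}|)$ (as $|\calQ_i|\ge|\calP_{i-1}|$ and, after pruning, $|\calP_i|$ can be much smaller, so the cost is tied to $|\calP_{i-1}|$, not $|\calP_i|$). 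Handling the edge cases of empty or singleton Pareto sets and infeasible-by-weight solutions is routine.
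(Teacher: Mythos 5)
Your proof is correct and follows essentially the same route as the paper's: correctness rests on Lemma~\ref{lem:calPi} (you usefully spell out why pruning $\calQ_i$ by $\calQ_i$-internal domination yields exactly $\calP_i$, a point the paper leaves to ``the previous discussion''), and the running-time bound comes from storing the sets sorted by weight and producing each $\calP_i$ from $\calP_{i-1}$ by a shift, merge, and single sweep in $\Theta(|\calP_{i-1}|)$ time. The one bookkeeping detail the paper makes explicit that you leave implicit is that one stores only the \emph{values} $(p\tra x,w\tra x)$ rather than the solution vectors themselves (otherwise the shift alone would cost $\Theta(n)$ per entry), together with back-pointers to $\calP_{i-1}$ so that $x^{\star}$ can be reconstructed in Line~\ref{line:NUAlgo5} at an extra $O(n)$ cost, which is absorbed since $\sum_{i=0}^{n-1}|\calP_i|\ge n$; this also sidesteps your parenthetical claim that $|\calP_{n-1}|=\Theta(|\calP_n|)$, which is not obviously true, though your alternative observation that $|\calP_n|\le 2|\calP_{n-1}|$ suffices.
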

\begin{proof}
The correctness of the algorithm follows immediately from the previous discussion.
In order to achieve the claimed running time, we do not compute the sets~$\calP_i$ explicitly, but only
the values of the solutions in these sets. That is, instead of~$\calP_i$ only the
set~$\val(\calP_i):=\{(p\tra x,w\tra x)\mid x\in\calP_i\}$ is computed. Analogously to the computation of~$\calP_i$, one can
compute~$\val(\calP_i)$ easily if~$\val(\calP_{i-1})$ is known.
If we store for each element of~$\val(\calP_i)$ a pointer to the element of~$\val(\calP_{i-1})$ from which
it originates, then in Step 5 the solution~$x^{\star}$ can be efficiently reconstructed from the sets~$\val(\calP_i)$ and
these pointers.

The running times of Steps~1 and~5 are~$O(1)$ and~$O(n+|\calP|)$, respectively, where the term~$n$
accounts for the running time of reconstructing the solution~$x^{\star}$ once its value~$(p\tra x^{\star},w\tra x^{\star})$ is determined.
In every iteration~$i$ of the for-loop, the running time of Step~3 to compute~$\val(\calQ_i)$ is~$\Theta(|\calP_{i-1}|)$
because on a unit-cost RAM the set~$\val(\calP_{i-1}^{+i})$ can be computed in time~$\Theta(|\calP_{i-1}|)$ from
the set~$\val(\calP_{i-1})$.

In a straightforward implementation, the running time of Step~4
is~$\Theta(|\calQ_i|^2)=\Theta(|\calP_{i-1}|^2)$ because
we need to compare every pair of values from~$\val(\calQ_i)$ and each comparison takes time~$O(1)$.
Step~4 can be implemented more efficiently. For this, we store the values in each
set~$\val(\calP_i)$ sorted in non-decreasing order of weights.
If~$\val(\calP_{i-1})$ is sorted in this way, then, without any additional computational effort, the computation of the set~$\val(\calQ_i)$ in Step~3 can be implemented such that~$\val(\calQ_i)$ is also sorted:
The sorted set~$\val(\calP_{i-1}^{+i})$ can be computed in time~$\Theta(|\calP_{i-1}|)$. Then, in order
to compute the set~$\val(\calQ_i)$, only the two sorted sets~$\val(\calP_{i-1})$ and~$\val(\calP_{i-1}^{+i})$ need
to be merged in time~$\Theta(|\calP_{i-1}|)$. If the set~$\val(\calQ_i)$ is sorted, Step~4 can be implemented
to run in time~$\Theta(|\calQ_i|)$
as a sweep algorithm going once through~$\val(\calQ_i)$ in non-decreasing order of weights (see Exercise~\ref{exer:NUAlgo}). 
\end{proof}

The above theorem ensures that the Nemhauser-Ullmann algorithm solves the knapsack problem efficiently if all Pareto sets~$\calP_i$ have polynomial size.\footnote{Let us remark that the sizes of the Pareto sets are in general not monotone and there are instances where~$|\calP_{i+1}| < |\calP_{i}|$ for some~$i$. Hence, it does not suffice if only the Pareto set~$\calP=\calP_n$ has polynomial size. However, we are not aware of any class of instances where~$|\calP_n|$ is polynomially bounded while~$|\calP_i|$ is superpolynomial for some~$i$.} As the knapsack problem is NP-hard, it is not surprising that there are instances with exponentially many Pareto-optimal solutions. If one sets~$p_i=w_i=2^i$ for each item~$i\in\{1,\ldots,n\}$ then even every solution from~$\{0,1\}^n$ is Pareto-optimal.

\subsection{Shortest Path Problem}\label{subsec:ShortestPathProblem}

Shortest path problems often come naturally with multiple objectives. Think for example of automotive  navigation systems in which one can usually choose between the shortest, cheapest, and fastest route. Let us consider the \emph{bicriteria single-source shortest path problem}. An instance of this problem is described by a directed graph~$G=(V,E)$ with costs $c:E\to\RR_{>0}$, weights $w:E\to\RR_{>0}$, and a source vertex~$s\in V$. The goal is to compute for each~$v\in V$ the set~$\calP^v$ of Pareto-optimal $s$-$v$-paths according to the following definition.

\begin{definition}
For an $s$-$v$-path~$P$ we denote by $w(P)=\sum_{e\in P}w(e)$ its weight and by $c(P)=\sum_{e\in P}c(e)$ its costs. An $s$-$v$-path~$P_1$ \emph{dominates} an $s$-$v$-path~$P_2$ if $w(P_1)\le w(P_2)$ and $c(P_1)\le c(P_2)$, with at least one of these inequalities being strict. An $s$-$v$-path~$P$ is called \emph{Pareto-optimal} if it is not dominated by any other $s$-$v$-path.
\end{definition}

A well-known algorithm for the single-criterion single-source shortest path problem (with only weights but no costs on the edges) is the Bellman-Ford algorithm. It stores a distance label for each vertex which is initially set to infinity for each vertex except the source~$s$ for which it is set to zero. Then it performs a sequence of relax operations on the edges as shown in the following pseudocode.

\begin{algorithm}[H]
\caption{Bellman-Ford algorithm}
\begin{algorithmic}[1]
\State $\dist(s)=0$;
\For{$v\in V\setminus\{s\}$} $\dist(v)=\infty$; \EndFor
  \For{$i=1,\ldots,|V|-1$}
    \For{each $(u,v)\in E$} 
      \State $\text{\sc relax}(u,v)$; 
    \EndFor
  \EndFor
\Procedure{relax}{$u,v$}
  \If{$\dist(v) > \dist(u) + w(u,v)$}
    \State $\dist(v) := \dist(u) + w(u,v)$;
  \EndIf
\EndProcedure
\end{algorithmic}
\end{algorithm}

It can be shown that after termination the distance label~$\dist(v)$ of each vertex~$v$ equals the length of the shortest $s$-$v$-path in~$G$. By standard methods one can adapt the algorithm so that for each vertex the actual shortest $s$-$v$-path is computed. One can also easily adapt this algorithm to the bicriteria shortest path problem if one replaces each distance label~$\dist(v)$ by a list~$L_v$ of $s$-$v$-paths. Initially~$L_s$ contains only the trivial path of length~$0$ from~$s$ to~$s$ and all other lists~$L_v$ are empty. In every relax operation for an edge~$(u,v)$ a new set $L_u^{+(u,v)}$ is obtained from~$L_u$ by appending the edge~$(u,v)$ to each path from~$L_u$. Then the paths from~$L_u^{+(u,v)}$ are added to~$L_v$. Finally~$L_v$ is cleaned up by removing all paths from~$L_v$ that are dominated by other paths from this list. This is shown in the following pseudocode.

\begin{algorithm}[H]
\caption{Bicriteria Bellman-Ford algorithm}
\begin{algorithmic}[1]
\State $L_s=\{\text{path of length~0 from~$s$ to~$s$}\}$;
\For{$v\in V\setminus\{s\}$} $L_v=\emptyset$; \EndFor
  \For{$i=1,\ldots,|V|-1$}
    \For{each $(u,v)\in E$} 
      \State $\text{\sc relax}(u,v)$; 
    \EndFor
  \EndFor
\Procedure{relax}{$u,v$}
    \State Obtain $L_u^{+(u,v)}$ from~$L_u$ by appending the edge~$(u,v)$ to each path from~$L_u$.
    \State $L_v := L_v \cup L_u^{+(u,v)}$;
    \State Remove dominated paths from~$L_v$.\label{line:BellmanFordRemoveDominated}
\EndProcedure
\end{algorithmic}
\label{algorithm:BFAlgorithm}
\end{algorithm}

Analogously to the Nemhauser-Ullmann algorithm, the running time of the Bicriteria Bellman-Ford algorithm depends crucially on the sizes of the lists~$L_v$ that appear throughout the algorithm. We have to look at the algorithm in slightly more detail to give an upper bound on its running time. The algorithm performs $M:=(|V|-1)\cdot|E|$ relax operations, which we denote by~$R_1,\ldots,R_{M}$. For a relax operation~$R_k$ that relaxes the edge~$(u,v)$, we define $u(R_k)=u$ and $v(R_k)=v$. Let $k\in[M]$ and consider the first~$k$ relax operations. These define for every vertex~$v\in V$ a set~$S_v^k$ of $s$-$v$-paths that can be discovered by the first~$k$ relax operations. To be more precise, $S_v^k$ contains exactly those $s$-$v$-paths that appear as a subsequence in $(u(R_1),v(R_1)),\ldots,(u(R_k),v(R_k))$. In the single-criterion version, after~$k$ relax operations the distance label~$\dist(v)$ contains the length of the shortest path in~$S_v^k$. In the bicriteria version, the list~$L_v$ contains after $k$ relax operations all paths from~$S_v^k$ that are Pareto-optimal within this set (i.e. that are not dominated by other paths from this set). We will denote the list~$L_v$ after $k$ relax operations by $L_v^k$ in the following. 

\begin{theorem}\label{thm:BellmannFordRunningTime}
After termination of the Bicriteria Bellman-Ford algorithm the list~$L_v$ equals for every vertex~$v\in V$ the set of Pareto-optimal $s$-$v$-paths. There exists an implementation with running time~$\Theta\left(\sum_{k=1}^{M}\left(|L_{u(R_k)}^{k-1}|+|L_{v(R_k)}^{k-1}|\right)\right)$.
\end{theorem}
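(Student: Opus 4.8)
The plan is to follow the analysis of the Nemhauser--Ullmann algorithm (Theorem~\ref{thm:NemhauserUllmannRunningTime}), with the item-by-item induction replaced by an induction over the relax operations $R_1,\dots,R_M$. The heart is a \emph{structural invariant}: for every $k\in\{0,1,\dots,M\}$ and every vertex $v$, the list $L_v^k$ equals the Pareto set of $S_v^k$ (the set of $s$-$v$-paths in $S_v^k$ not dominated by any other $s$-$v$-path in $S_v^k$). I would prove this by induction on $k$. The case $k=0$ is immediate from the initialization, since $S_s^0=L_s^0=\{\text{trivial path}\}$ and $S_v^0=L_v^0=\emptyset$ for $v\neq s$. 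For the inductive step let $R_k$ relax the edge $(u,v)$. For every $w\neq v$ neither $S_w$ nor $L_w$ is touched, so the invariant carries over. For $v$ one first verifies the combinatorial identity $S_v^k=S_v^{k-1}\cup (S_u^{k-1})^{+(u,v)}$: an $s$-$v$-walk occurring as a subsequence of $(u(R_1),v(R_1)),\dots,(u(R_k),v(R_k))$ either already occurs in the first $k-1$ pairs, or its embedding matches the last pair $(u,v)$ as final edge, and deleting that edge exhibits it as an element of $S_u^{k-1}$ with $(u,v)$ appended.

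It then remains to show that the algorithm's update for $v$ --- removing dominated paths from $L_v^{k-1}\cup (L_u^{k-1})^{+(u,v)}$ --- produces exactly the Pareto set of $S_v^k$. For this I would isolate two elementary facts, both provable by the finite-descent argument already used to show that an optimal knapsack solution may be taken Pareto-optimal and in Lemma~\ref{lem:calPi}: (a) for finite sets $A,B$, the Pareto set of $A\cup B$ equals the Pareto set of $P_A\cup P_B$, where $P_A,P_B$ are the Pareto sets of $A,B$, because any dominated element of $A$ is dominated by some Pareto-optimal element of $A$; and (b) appending a fixed edge $(u,v)$ merely adds the constant $(w(u,v),c(u,v))$ to the weight--cost pair of every path and hence preserves all domination relations, so the Pareto set of $(S_u^{k-1})^{+(u,v)}$ is $(L_u^{k-1})^{+(u,v)}$ by the induction hypothesis for $u$. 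Applying (a) with $A=S_v^{k-1}$ and $B=(S_u^{k-1})^{+(u,v)}$, then the induction hypothesis for $v$ together with (b), shows that the pruned set $L_v^k$ equals the Pareto set of $S_v^k$, completing the induction. As in the knapsack case one fixes a consistent tie-breaking so that no list ever contains two paths with identical weight and cost, applied compatibly in (b).

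Correctness then follows from a \emph{termination observation}: $S_v^M$ contains every simple $s$-$v$-path, since the relax sequence runs $|V|-1$ full rounds over $E$, so a simple path $s=x_0,\dots,x_\ell=v$ (necessarily with $\ell\le|V|-1$) embeds by taking the occurrence of $(x_{j-1},x_j)$ from round $j$. Because all costs and weights are strictly positive, deleting a cycle from any non-simple $s$-$v$-walk yields a walk that strictly dominates it in both coordinates, and iterating produces a dominating simple path. Hence (i) no non-simple walk is Pareto-optimal, so every Pareto-optimal $s$-$v$-path is simple, lies in $S_v^M$, and --- being undominated among \emph{all} $s$-$v$-paths --- lies in the Pareto set of $S_v^M$, which equals $L_v^M$; and (ii) conversely, a path in $L_v^M$ dominated by some $s$-$v$-path would be dominated by a simple one, which lies in $S_v^M$, contradicting $L_v^M=\mathrm{Pareto}(S_v^M)$. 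Thus $L_v^M=\calP^v$ for every $v$.

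Finally, the running time is a routine adaptation of the Nemhauser--Ullmann bookkeeping: store with each path only its weight--cost pair and a pointer to the path it extends, and keep each list $L_v$ sorted by weight. A relax operation $R_k$ on $(u,v)$ then builds the sorted list $(L_u^{k-1})^{+(u,v)}$ in time $\Theta(|L_u^{k-1}|)$ by shifting every pair by $(w(u,v),c(u,v))$, merges it with the sorted list $L_v^{k-1}$ in time $\Theta(|L_u^{k-1}|+|L_v^{k-1}|)$, and removes dominated pairs by one weight-sorted sweep (as in the proof of Theorem~\ref{thm:NemhauserUllmannRunningTime}) in time $\Theta(|L_u^{k-1}|+|L_v^{k-1}|)$, leaving the result sorted for the next operation; summing over $k=1,\dots,M$ gives the claimed bound, and the Pareto-optimal paths themselves are recovered from the stored pointers in $O(|V|)$ time each when needed, exactly as $x^\star$ is reconstructed in the knapsack algorithm. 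I expect the main obstacle to be the structural invariant --- specifically, pinning down facts (a) and (b) and the consistent tie-breaking carefully enough to conclude that ``union then prune'' computes the \emph{full} Pareto set of $S_v^k$ rather than merely a subset --- whereas the termination observation hinges on the single point that positivity of the edge data forces Pareto-optimal paths to be simple, hence generated within $|V|-1$ rounds.
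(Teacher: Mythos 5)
Your proof is correct and follows essentially the same approach the paper sketches: an inductive invariant that $L_v^k$ is exactly the Pareto set of $S_v^k$, combined with positivity of costs and weights to force Pareto-optimal paths to be simple (hence discovered within $|V|-1$ rounds), and a sorted merge-and-sweep implementation of each relax operation for the running time. The paper itself only gestures at the inductive correctness argument (deferring it to Exercise~\ref{exer:BellmanFord}) and gives the two-line running-time remark, so your version is a faithful and careful elaboration of the same strategy rather than a different route.
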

\begin{proof}
The correctness of the algorithm follows by an inductive argument along the lines of the analysis of the single-criterion version (see Exercise~\ref{exer:BellmanFord}). The analysis of the running time is similar to the proof of Theorem~\ref{thm:NemhauserUllmannRunningTime}. The dominating factor is the time to remove dominated paths from~$L_v$ in Line~\ref{line:BellmanFordRemoveDominated} of the pseudocode. A naive implementation has running time $\Theta(|L_{u(R_k)}^{k-1}|\cdot |L_{v(R_k)}^{k-1}|)$ for the $k$th relax operation, while the running time $\Theta(|L_{u(R_k)}^{k-1}| + |L_{v(R_k)}^{k-1}|)$ can be achieved by sweeping through the lists when they are sorted in non-decreasing order of weight.
\end{proof}

While in applications where the bicriteria shortest path problem occurs, it has been observed that the number of Pareto-optimal solutions is usually not very large, one can easily construct instances of the bicriteria shortest path problem in which the number of Pareto-optimal paths is exponential in the size of the graph (see Exercise~\ref{exer:NumberPareto}).

The reader might wonder why we adapted the Bellman-Ford algorithm and not Dijkstra's algorithm to the bicriteria single-source shortest path problem. Indeed there is a generalization of Dijkstra's algorithm to the bicriteria shortest path problem due to \citet{Hansen79}, which also performs a sequence of operations similar to the relax operations of the Bellman-Ford algorithm. However, in contrast to the Bellman-Ford algorithm the sequence of relax operations is not fixed beforehand but it depends on the actual costs and weights of the edges. For this reason, it is not clear how to analyze the expected running time and in particular the analysis that we present in Section~\ref{sec:NumberPOSolutions} does not apply to the generalization of Dijkstra's algorithm.

\subsection{Multiple Objectives and Other Optimization Problems}

For the sake of simplicity, we have discussed only problems with two objectives above. However, one can easily adapt the definition of Pareto-optimal solutions and both presented algorithms to more than two objectives. Consider the multidimensional knapsack problem, a version of the knapsack problem in which every item still has a single profit but instead of a single weight, it has a weight vector from~$\RRp^{d-1}$ for some~$d\ge 2$, and also the capacity is a vector from~$\RRp^{d-1}$. This problem gives rise to a multiobjective optimization problem with~$d$ objectives: maximize the profit~$p\tra x$ and minimize for each~$i\in[d-1]$ the $i$th weight~$(w^{(i)})\tra x$. Similarly it is often natural to consider multiobjective shortest path problems with more than two objectives.

In order to compute the Pareto set of an instance of the multidimensional knapsack problem or the multiobjective shortest path problem, no modification to the pseudocode of the Nemhauser-Ullmann algorithm (Algorithm~\ref{algorithm:NUAlgorithm}) and the Bicriteria Bellman-Ford algorithm (Algorithm~\ref{algorithm:BFAlgorithm}) are necessary. However, the implementation and the analysis of the running time have to be adapted. The crucial difference is that the removal of dominated solutions from~$\calQ_i$ and~$L_v$ cannot be implemented in time linear in the sizes of these sets anymore because the sweeping approach, which assumes that the solutions are sorted with respect to one of the objectives, fails for more than two objectives. If one uses the naive implementation, which pairwisely compares the solutions, then the running times of the algorithms become $\Theta(\sum_{i=0}^{n-1}|\calP_{i}|^2)$ and $O(\sum_{i=1}^{M}|L_{u(R_i)}^{i-1}|\cdot |L_{v(R_i)}^{i-1}|)$, respectively.

Asymptotically one can do better by using known algorithms for the maximum vector problem to filter out the dominated solutions. In this problem a set of~$m$ vectors in~$\RR^k$ is given and one wants to compute the set of Pareto-optimal vectors among them. The fastest known algorithm for this problem is due to \citet{KungLP75}. It relies on divide and conquer and its running time is $O(m\log^{k-2}m)$. For~$d$ objectives this yields running times of
$
   \Theta\left(\sum_{i=0}^{n-1}|\calP_{i}|\log^{d-2}(|\calP_{i}|)\right)
$
and
\[
     O\left(\sum_{i=1}^{M}(|L_{u(R_i)}^{i-1}|+ |L_{v(R_i)}^{i-1}|)\cdot\log^{d-2}(|L_{u(R_i)}^{i-1}|+ |L_{v(R_i)}^{i-1}|)\right)
\]
for the Nemhauser-Ullmann algorithm and the Bellman-Ford algorithm, respectively.

The Nemhauser-Ullmann algorithm and the Bicriteria Bellman-Ford algorithm are only two examples of many algorithms in the literature for computing Pareto sets of various multiobjective optimization problems. Similar algorithms exist, for example, for the multiobjective network flow problem. As a rule of thumb, algorithms that solve the single-criterion version of an optimization problem by dynamic programming can usually be adapted to compute the Pareto set of the multiobjective version.

On the other hand, there are also problems for which it is unknown if there exist algorithms that compute the Pareto set in time polynomial in its size and the sizes of the Pareto sets of appropriate subproblems. The multiobjective spanning tree problem is one such example, where the best known way to compute the Pareto set is essentially to first compute the set of all spanning trees and then to remove the dominated ones. An even stronger requirement is that of an efficient output-sensitive algorithm, which computes the Pareto set in time polynomial in its size and the input size. \citet{BoeklerEMM17} show that such an algorithm exists for the multiobjective version of the minimum-cut problem and that no such algorithm exists for the bicriteria shortest path problem, unless P$=$NP. For many other multiobjective problems, including the knapsack problem and the multiobjective spanning tree problem, it is an open question whether efficient output-sensitive algorithms exist.

\subsection{Approximate Pareto Curves}

For virtually every multiobjective optimization problem the number of Pareto-optimal solutions can be exponential in the worst case. One way of coping with this problem is to relax the requirement of finding the complete Pareto set. A solution $x$ is \emph{$\varepsilon$-dominated} by a solution $y$ if $y$ is worse than~$x$ by at most a factor of~$1+\varepsilon$ in each objective (i.e., $w(y)/w(x)\leq 1+\varepsilon$ for each criterion~$w$ that is to be minimized and $p(x)/p(y)\leq 1+\varepsilon$ for each criterion~$p$ that is to be maximized). We say that $\calP_{\varepsilon}$ is an \emph{$\varepsilon$-approximation of a Pareto set} $\calP$ if for any solution in $\calP$, there is a solution in $\calP_{\varepsilon}$ that $\varepsilon$-dominates it.

In his pioneering work, \citet{Hansen80} presents an approximation scheme for computing $\varepsilon$-approximate Pareto sets of the bicriteria shortest path problem. \citet{PapadimitriouY00} show that for any instance of a multiobjective optimization problem, there is an $\varepsilon$-approximation of the Pareto set whose size is polynomial in the input size and $1/\varepsilon$ but exponential in the number of objectives. Furthermore, they define the \emph{gap version} of a multiobjective optimization problem with $d$ objectives as follows: given an instance of the problem and a vector $b\in\RR^d$, either return a solution whose objective vector dominates $b$ or report (correctly) that there does not exist any solution whose objective vector is better than $b$ by more than a $(1+\varepsilon)$ factor in all objectives. They show that an FPTAS for approximating the Pareto set of a multiobjective optimization problem exists if and only if the gap version of the problem can be solved in polynomial time. In particular, this implies that if the exact single-criterion version of a problem (i.e., the question ``Is there a solution with weight exactly $x$?'') can be solved in pseudopolynomial time, then its multiobjective version admits an FPTAS for approximating the Pareto set. This is the case, for example, for the spanning tree problem, the all-pair shortest path problem, and the perfect matching problem.

\citet{VassilvitskiiY04} show how to compute $\varepsilon$-approximate Pareto sets whose size is at most three times as large as the smallest such set for bicriteria problems whose gap versions can be solved in polynomial time. \citet{DiakonikolasY07} improve this factor to two and show that this is the best possible that can be achieved in polynomial time, unless P$=$NP.

\section{Number of Pareto-optimal solutions}\label{sec:NumberPOSolutions}

Both for the knapsack problem and the bicriteria shortest path problem, the number of Pareto-optimal solutions increases only moderately with the input size in applications. This is in contrast to the exponential worst-case behavior (see Exercise~\ref{exer:NumberPareto}). To explain this discrepancy, we will analyze the number of Pareto-optimal solutions in the framework of smoothed analysis. First we will focus on the knapsack problem but we will see afterwards that the proven bound also holds for a much larger class of problems including the bicriteria shortest path problem and many other natural bicriteria optimization problems. We will then also briefly discuss known results for problems with more than two objectives.

\subsection{Knapsack Problem}

Let us consider the knapsack problem. In a worst-case analysis the adversary is allowed to choose the profits $p_1,\ldots,p_n$ and the weights $w_1,\ldots,w_n$ exactly (he can also choose the capacity but the number of Pareto-optimal solutions is independent of this). This makes him very powerful and makes it possible to choose an instance in which every solution is Pareto-optimal. In order to limit the power of the adversary to construct such artificial instances that do not resemble typical inputs, we add some randomness to his decisions.

Let~$\phi\ge 1$ be a parameter. In the following analysis, we assume that the adversary can still determine the profits exactly while for each weight he can only choose an interval of length~$1/\phi$ from which it is chosen uniformly at random independently of the other weights. This means that the adversary can specify each weight only with a precision of~$1/\phi$. We normalize the weights and restrict the adversary to intervals that are subsets of~$[0,1]$. This normalization is necessary to ensure that the effect of the noise cannot be ruled out by scaling all weights in the input by some large number.

Observe that the parameter~$\phi$ measures the strength of the adversary. If $\phi=1$ then all weights are chosen uniformly at random from~$[0,1]$, which resembles an average-case analysis. On the other hand, in the limit for $\phi\to\infty$ the adversary can determine the weights (almost) exactly and the model approaches a classical worst-case analysis. Hence, it is not surprising that the bound that we will prove for the expected number of Pareto-optimal solutions grows with~$\phi$. However, we will see that it grows only polynomially with~$n$ and~$\phi$, which implies that already a small amount of random noise suffices to rule out the worst case and to obtain a benign instance in expectation.

\begin{theorem} \label{thm:UpperPareto}
Consider an instance~$\calI$ of the knapsack problem with arbitrary profits~$p_1,\ldots,p_n\in\RRp$ in which every weight~$w_i$ is chosen uniformly at random from an arbitrary interval~$A_i\subseteq[0,1]$ of length~$1/\phi$ independently of the other weights. Then the expected number of Pareto-optimal solutions in~$\calI$ is bounded from above by $n^2\phi+1$.
\end{theorem}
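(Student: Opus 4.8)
The plan is to bound the expected size of the Pareto set $\calP = \calP_n$ by a union bound over the $n$ profit values that solutions can take, using the randomness in the weights to control, for each fixed profit level, the expected number of distinct weights that survive in $\calP$. Since the profits are fixed by the adversary, the solutions $x \in \{0,1\}^n$ partition into at most $2^n$ classes by the value $p\tra x$, but many of these values may coincide; what matters is that among all solutions with a given profit value $t$, at most one is Pareto-optimal, namely the one of smallest weight. So $|\calP| \le 1 + (\text{number of distinct profit values realized by Pareto-optimal solutions with a strictly positive profit})$, and the real work is to show that the expected number of such "winning" profit levels is at most $n^2\phi$.

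The key device is a \emph{characteristic-interval} argument. For each $j \in \{1,\dots,n\}$, I would consider the random variable counting Pareto-optimal solutions that \emph{do not} contain item $j$ but whose weight is very close to the weight of some Pareto-optimal solution just below them — more precisely, following the Beier–Vöcking / Röglin–Teng style, define for a threshold $t \ge 0$ the \emph{winner} $x^\star(t)$ to be the Pareto-optimal solution maximizing profit subject to weight $\le t$, and the \emph{loser} to be the Pareto-optimal solution of next-higher profit; the "loser gap" is the difference in their weights. A Pareto point appears in a small weight window $(t, t+\varepsilon]$ exactly when this gap is at most $\varepsilon$. The crucial structural observation is that if $x$ is the loser at threshold $t$, then $x$ must contain some item $j$ such that $x^{-j}$ (removing item $j$) is feasible to serve as, and in fact equals, the winner $x^\star(t)$ — because dropping an item both decreases weight and decreases profit. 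One then fixes $j$, conditions on all weights except $w_j$, observes that in this conditional space the identity of the winner $x^\star(t)$ and the identity of the best candidate loser containing item $j$ are both \emph{determined} (they don't depend on $w_j$ once we've decided the comparison is "at profit level just above $p\tra x^\star(t)$"), so the loser gap becomes an affine function of $w_j$ with slope $1$; since $w_j$ is uniform on an interval of length $1/\phi$, the probability that this gap lands in $(0,\varepsilon]$ is at most $\varepsilon\phi$.

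Assembling this: partition the weight axis $[0,n]$ (the maximum possible weight, since each $w_i \le 1$) into $n/\varepsilon$ windows of length $\varepsilon$; the expected number of Pareto points is at most $1$ plus the sum over windows of the probability that a Pareto point falls in that window, which by the above is at most $1 + \frac{n}{\varepsilon}\cdot \sum_{j=1}^n \varepsilon\phi = 1 + n^2\phi$. (One has to be slightly careful that at most one new Pareto point appears per window as $\varepsilon \to 0$, so that no overcounting occurs; taking $\varepsilon \to 0$ makes the discretization error vanish, and the bound $1+n^2\phi$ survives.) The precise way to make the limiting argument rigorous is to write $|\calP| = 1 + \int_0^n [\text{indicator that a Pareto point has weight in } (t,t+dt]]$, i.e. $|\calP| - 1$ equals the number of $t$ at which the winner changes, bound $\Pr[\text{winner changes in }(t,t+\varepsilon]] \le \sum_j \Pr[\text{loser-}j\text{ gap} \le \varepsilon] \le n\varepsilon\phi$, sum over a partition and let $\varepsilon \to 0$.

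The main obstacle, and the step that needs the most care, is proving the \emph{independence/determinism} claim: that after conditioning on $(w_i)_{i \ne j}$, the loser gap at the relevant threshold is an affine function of $w_j$ with unit slope whose constant term does not depend on $w_j$. The subtlety is that changing $w_j$ changes the weights of \emph{all} solutions containing item $j$ simultaneously, including potential winners and losers, so one must argue that the \emph{pair} (winner, candidate-loser-containing-$j$) that realizes the minimal gap is a function only of $(w_i)_{i\ne j}$. The standard resolution is to not talk about "the" loser directly but to union-bound over which item $j$ is dropped and to define, for each fixed $j$, a surrogate random variable $\Lambda_j(t)$ that is measurable with respect to $(w_i)_{i \ne j}$ and dominates the event of interest; then $w_j$ enters only additively. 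Getting this measurability bookkeeping exactly right — and verifying that the reduction "$x$ is a loser $\Rightarrow$ some $x^{-j}$ is the winner" is valid in the presence of ties (which the tie-breaking convention from Lemma~\ref{lem:calPi} handles) — is where the technical weight of the proof lies.
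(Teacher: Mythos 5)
Your overall strategy --- partition the weight axis into small windows, bound the probability that a window contains a Pareto point via a winner/loser gap, union-bound over which coordinate controls that gap, and let the window width shrink --- is exactly the approach of the paper (its Lemma~\ref{lemma:ProbPOInterval} is your per-window bound, and Lemmas~\ref{lemma:Fk} and~\ref{lemma:ExXk} formalize the limiting step you sketch via an integral). But the structural observation at the heart of your union bound is stated incorrectly, and the error is not cosmetic. You assert that if $\hat{x}$ is the loser then there is a $j$ with $\hat{x}_j=1$ such that $\hat{x}^{-j}$ \emph{equals} the winner $x^{\star}$. That is false: the winner and loser can differ in many coordinates (take $x^{\star}=(1,1,0,0)$ and $\hat{x}=(0,0,1,1)$). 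The correct, weaker statement --- which is what the paper's Lemma~\ref{lemma:Lambda} actually proves --- is only that, since $w\tra x^{\star} < w\tra\hat{x}$ and all weights are non-negative, there exists \emph{some} index $j$ with $x^{\star}_j=0$ and $\hat{x}_j=1$.

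This weaker fact suffices, but only if you replace ``the winner'' by a \emph{restricted} winner in the $j$th surrogate variable, and this is also where your measurability argument breaks down. You claim that, conditioned on $(w_i)_{i\neq j}$, ``the identity of the winner $x^{\star}(t)$'' is determined; it is not, because the unrestricted winner may itself contain item $j$, so its identity can depend on $w_j$. What \emph{is} a deterministic function of $(w_i)_{i\neq j}$ is the restricted winner $x^{\star,j}$, the best solution of weight at most~$t$ among those with $x_j=0$; and hence also the set of candidate losers containing item~$j$ and, because their weights all shift by the same $w_j$, the restricted loser $\hat{x}^{j}$. Your instinct about ``a surrogate random variable $\Lambda_j(t)$ that is measurable with respect to $(w_i)_{i\neq j}$'' is right, but the surrogate must be built from this restricted pair $(x^{\star,j},\hat{x}^{j})$, not from the unrestricted winner, and the coupling to the true loser gap goes through the existence-of-a-differing-index claim above, not through ``$\hat{x}^{-j}$ equals the winner.'' With that repair, and with the paper's explicit high-probability event $\calF_k$ replacing your informal ``$\varepsilon\to 0$'' remark, your argument gives the stated bound $n^2\phi+1$.
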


The proof of Theorem~\ref{thm:UpperPareto}, which we present in detail below, can be summarized as follows. Since all weights take values between~$0$ and~$1$, all solutions have weights between~$0$ and~$n$. We divide the interval~$[0,n]$ uniformly into a large number~$k$ of subintervals of length~$n/k$ each. For large enough~$k$ it is unlikely that there exist two Pareto-optimal solutions whose weights lie in the same subinterval because the weights are continuous random variables. Assuming that this does not happen, the number of Pareto-optimal solutions equals the number of subintervals that contain a Pareto-optimal solution. The most important and non-trivial step is then to bound, for each subinterval, the probability that it contains a Pareto-optimal solution. Once we have proven an upper bound for this, the theorem follows by summing up this upper bound over all subintervals due to linearity of expectation.

Before we prove the theorem, we state one simple but crucial property of the random variables that we consider.
\begin{lemma}\label{lemma:epsIntervalEasy}
Let~$X$ be a random variable that is chosen uniformly at random from some interval~$A$ of length~$1/\phi$. Furthermore let~$I$ be an interval of length~$\eps$. Then $\Pr[X\in I] \le \phi\eps$.
\end{lemma}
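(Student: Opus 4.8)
The plan is to invoke the explicit probability density of a uniform random variable. Since $X$ is drawn uniformly at random from an interval $A$ of length $1/\phi$, its density $f$ is given by $f(t)=\phi$ for $t\in A$ and $f(t)=0$ otherwise. First I would write $\Pr[X\in I]=\int_{I}f(t)\,dt=\int_{I\cap A}\phi\,dt=\phi\cdot|I\cap A|$, where $|\cdot|$ denotes length (Lebesgue measure). Since $I\cap A\subseteq I$ we get $|I\cap A|\le|I|=\eps$, and therefore $\Pr[X\in I]\le\phi\eps$, which is the claim. Equivalently, avoiding integrals: if $A=[a,a+1/\phi]$, then the probability that $X$ lands in any given subinterval of $A$ is exactly that subinterval's length divided by $1/\phi$; applying this to $I\cap A$ and bounding $|I\cap A|\le|I|=\eps$ finishes the argument.

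There is essentially no obstacle here: the lemma is a one-line consequence of the density of $X$ being everywhere at most $\phi$, and the only point requiring a word of care is the degenerate case in which $I$ and $A$ are disjoint, where the probability is $0$ and the inequality holds trivially (otherwise $I\cap A$ is again an interval and its length is at most $\min\{\eps,1/\phi\}$). In fact the same reasoning shows $\Pr[X\in S]\le\phi\cdot|S|$ for any measurable set $S$; the statement is phrased only for intervals $I$ because that is all that is needed in the proof of Theorem~\ref{thm:UpperPareto}, where $I$ will be one of the short subintervals into which $[0,n]$ is partitioned.
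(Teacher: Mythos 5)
Your proof is correct and essentially identical to the paper's: both bound $\Pr[X\in I]$ by $|A\cap I|/|A|\le |I|/|A|=\phi\eps$, you just phrase the first equality via the density function rather than the ratio of lengths. The extra remarks about the disjoint case and the generalization to measurable sets $S$ are fine but not needed.
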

\begin{proof}
Since~$X$ is chosen uniformly at random from~$A$, we obtain
\[
   \Pr[X\in I] = \frac{|A\cap I|}{|A|} \le \frac{|I|}{|A|} \le 
   \frac{\eps}{1/\phi} = \phi\eps.\qedhere
\]
\end{proof}

\begin{proof}[Proof of Theorem~\ref{thm:UpperPareto}]
Every solution~$x\in\{0,1\}^n$ has a weight~$w\tra x$ in the interval~$[0,n]$ because each weight~$w_i$ lies in~$[0,1]$. We partition the interval~$(0,n]$ uniformly into $k\in\NN$~intervals~$I_0^k,\ldots,I_{k-1}^k$ for some large number~$k$ to be chosen later. Formally, let~$I_i^k=(ni/k,n(i+1)/k]$. We say that the interval~$I_i^k$ is \emph{non-empty} if there exists a Pareto-optimal solution~$x\in\calP$ with~$w\tra x\in I_i^k$.

We denote by~$X^k$ the number of non-empty intervals~$I_i^k$ plus one. The term~$+1$ accounts for the solution~$0^n$, which is always Pareto-optimal and does not belong to any interval~$I_i^k$. Nevertheless, the variable~$X^k$ can be much smaller than~$|\calP|$ because many Pareto-optimal solutions could lie in the same interval~$I_i^k$. We will ensure that every interval~$I_i^k$ contains at most one Pareto-optimal solution with high probability by choosing~$k$ sufficiently large. Then, with high probability,~$|\calP|=X^k$.

In the following, we make this argument more formal. For~$k\in\NN$, let~$\calF_{k}$ denote the event that there exist two different solutions~$x,y\in\{0,1\}^n$ with~$|w\tra x - w\tra y|\le n/k$. Since each interval~$I_i^k$ has length~$n/k$, every interval~$I_i^k$ contains at most one Pareto-optimal solution if~$\calF_k$ does not occur.
\begin{lemma}\label{lemma:Fk}
For every~$k\in\NN$, $\Pr\left[\calF_{k}\right] \le \frac{2^{2n+1}n\phi}{k}$.
\end{lemma}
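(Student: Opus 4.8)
The plan is to bound $\Pr[\calF_k]$ by a union bound over all pairs of distinct solutions, where the probability for a single such pair is controlled via the principle of deferred decisions together with Lemma~\ref{lemma:epsIntervalEasy}.

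First I would note that there are at most $2^{2n}$ ordered pairs $(x,y)$ of distinct solutions in $\{0,1\}^n$. Hence, by a union bound over these pairs, it suffices to prove that for every fixed pair of distinct solutions $x\ne y$,
\[
   \Pr\bigl[\,|w\tra x-w\tra y|\le n/k\,\bigr]\le \frac{2n\phi}{k};
\]
summing this bound over the at most $2^{2n}$ pairs then yields $\Pr[\calF_k]\le 2^{2n}\cdot\frac{2n\phi}{k}=\frac{2^{2n+1}n\phi}{k}$, as claimed.

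To prove the per-pair bound, fix distinct $x,y\in\{0,1\}^n$ and choose an index $j\in\{1,\ldots,n\}$ with $x_j\ne y_j$; such an index exists because $x\ne y$, and it depends only on $x$ and $y$, not on the random weights. Write $w\tra x-w\tra y=(x_j-y_j)\,w_j+\sum_{i\ne j}(x_i-y_i)\,w_i$ and condition on an arbitrary realization of the weights $(w_i)_{i\ne j}$. Then $\sum_{i\ne j}(x_i-y_i)w_i$ equals a fixed constant $c$, and since $x_j-y_j\in\{-1,+1\}$, the event $|w\tra x-w\tra y|\le n/k$ becomes the event that $w_j$ lies in a fixed interval of length $2n/k$, namely $\{t : |(x_j-y_j)t+c|\le n/k\}$. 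By independence of the weights, $w_j$ is still uniformly distributed on its interval $A_j$ of length $1/\phi$ after this conditioning, so Lemma~\ref{lemma:epsIntervalEasy} applied with $\eps=2n/k$ bounds the conditional probability of this event by $\phi\cdot 2n/k=2n\phi/k$. Since this holds for every realization of $(w_i)_{i\ne j}$, it holds unconditionally, which establishes the per-pair bound.

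The argument is largely routine; the one step that needs a little care is the conditioning above: one must observe that the ``forbidden'' interval for $w_j$ is fully determined by $x$, $y$, and the other weights, so that fixing those reduces the two-sided event on $w\tra x-w\tra y$ to a one-dimensional interval event of the kind Lemma~\ref{lemma:epsIntervalEasy} handles. Note also that the interval $\{t:|(x_j-y_j)t+c|\le n/k\}$ has length $2n/k$, not $n/k$; this factor of $2$ is exactly what turns the pair count $2^{2n}$ into the $2^{2n+1}$ appearing in the statement.
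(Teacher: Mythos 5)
Your proof is correct and follows essentially the same route as the paper: a union bound over the at most $2^{2n}$ pairs of distinct solutions, combined with the principle of deferred decisions on a coordinate where the two solutions differ and an application of Lemma~\ref{lemma:epsIntervalEasy} to an interval of length $2n/k$. The only cosmetic difference is that the paper normalizes to $x_i=0$, $y_i=1$ ``without loss of generality,'' while you handle both signs of $x_j-y_j$ explicitly.
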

\begin{proof}
There are~$2^{n}$ choices for~$x$ and~$y$ each. We prove the lemma by a union bound over all these choices. Let~$x,y\in\{0,1\}^n$ with~$x\neq y$ be fixed. Then there exists an index~$i$ with~$x_i\neq y_i$. Assume without loss of generality that~$x_i=0$ and~$y_i=1$. We use the principle of deferred decisions and assume that all weights~$w_j$ except for~$w_i$ are already fixed. Then $w\tra x - w\tra y = \alpha - w_i$ for some constant~$\alpha$ that depends on~$x$ and~$y$ and the fixed profits~$w_j$. It holds that
\begin{align*}
   \Pr\left[|w\tra x - w\tra y|\le \frac{n}{k}\right] 
   & \le \sup_{\alpha\in\RR}\Pr_{w_i}\left[|\alpha-w_i|\le \frac{n}{k}\right] \\
   & = \sup_{\alpha\in\RR}\Pr_{w_i}\left[w_i\in\bigg[\alpha-\frac{n}{k},\alpha+\frac{n}{k}\bigg]\right] \le \frac{2n\phi}{k},
\end{align*}
where the last inequality follows from Lemma~\ref{lemma:epsIntervalEasy}\footnote{Formally, we condition on the outcome of the~$w_j$ with~$j\neq i$. This outcome determines the value of~$\alpha$. Then we apply the law of total probability, but instead of integrating over all possible outcomes of the~$w_j$ with~$j\neq i$, we derive an upper bound by looking only at the worst choice for~$\alpha$.}. Now a union bound over all choices for~$x$ and~$y$ concludes the proof.
\end{proof}

The most non-trivial part in the analysis is the following lemma, which states for an arbitrary interval an upper bound for the probability that it contains a Pareto-optimal solution. We defer the proof of this lemma to the end of this section.

\begin{lemma}\label{lemma:ProbPOInterval}
For every~$t\ge 0$ and every~$\eps>0$,
\[
  \Pr[\exists x\in\calP\mid w\tra x \in (t,t+\eps]] \le n\phi\eps.
\]
\end{lemma}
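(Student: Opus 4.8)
The plan is to prove the bound by a union bound over the $n$ items: for each item $i$ I would isolate an event $E_i$ depending on the outcome of $w_i$ through an interval of length $\eps$, so that Lemma~\ref{lemma:epsIntervalEasy} gives $\Pr[E_i]\le\phi\eps$, and I would check that $\bigcup_{i=1}^n E_i$ contains the event in question.

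To set this up, I would first reformulate the event. For a capacity $c\ge0$ let $\mathrm{win}(c)$ be the solution of maximum profit among all solutions of weight at most $c$ (which is Pareto-optimal; break ties consistently). Then $\mathrm{win}(\cdot)$ is a step function whose jumps occur exactly at the weights of the Pareto-optimal solutions, so that
\[
\{\exists x\in\calP:\ w\tra x\in(t,t+\eps]\}=\{\mathrm{win}(t)\neq\mathrm{win}(t+\eps)\}.
\]
On this event put $y:=\mathrm{win}(t+\eps)$ and note $y\neq 0^n$, since $t\ge0$ forces $w\tra y>0$. Moreover $y\not\subseteq\mathrm{win}(t)$ (otherwise $w\tra y\le w\tra\mathrm{win}(t)\le t$), so there is an item $i$ with $y_i=1$ and $\mathrm{win}(t)_i=0$. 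I would therefore define $E_i$ to be the event that $\mathrm{win}(t)\neq\mathrm{win}(t+\eps)$, that $\mathrm{win}(t+\eps)_i=1$, and that $\mathrm{win}(t)_i=0$. By the above, $\{\mathrm{win}(t)\neq\mathrm{win}(t+\eps)\}\subseteq\bigcup_{i=1}^n E_i$, so it remains to show $\Pr[E_i]\le\phi\eps$ for every $i$.

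For a fixed $i$ I would condition on all the weights $w_j$ with $j\neq i$ and bound $\Pr[E_i\mid (w_j)_{j\neq i}]$. The first useful observation is that on $E_i$ the solution $\mathrm{win}(t)$ does not contain item $i$, so $\mathrm{win}(t)$ is the maximum-profit solution among all $x$ with $x_i=0$ and $w\tra x\le t$; call this solution $\ell_i$. Crucially $\ell_i$, and its weight $w\tra\ell_i\le t$, do not depend on $w_i$. A short argument then shows $w\tra\ell_i+w_i>t$ on $E_i$: otherwise $\ell_i^{+i}$ would be feasible at capacity $t$ with strictly larger profit than $\ell_i=\mathrm{win}(t)$, a contradiction. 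The key claim — and the main obstacle — is that $E_i$ also forces $w\tra\ell_i+w_i$ (or, more likely, the weight of the relevant new Pareto-optimal solution, which is some shift $v^{+i}$ of a Pareto point $v$ of the instance without item $i$) to lie within an interval of length $\eps$ determined by $(w_j)_{j\neq i}$; equivalently, that $E_i$ confines $w_i$ to a single interval of length $\eps$. Granting this, Lemma~\ref{lemma:epsIntervalEasy} yields $\Pr[E_i\mid(w_j)_{j\neq i}]\le\phi\eps$, hence $\Pr[E_i]\le\phi\eps$, and the union bound completes the proof.

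I expect the confinement claim to be the delicate part. The intuition is monotone: as $w_i$ increases, the copy of the Pareto curve of the deleted-item instance translates to the right by $(w_i,p_i)$, so the first Pareto-optimal solution of the full instance lying above weight $t$ moves to the right as well, while for $w_i$ too small $\mathrm{win}(t)$ still contains item $i$ (violating $E_i$) and for $w_i$ too large this first Pareto point is pushed past $t+\eps$ (again violating $E_i$). Turning this picture into a rigorous proof requires using that the Pareto curve of the deleted-item instance is totally ordered (higher profit iff higher weight) to control exactly how "the first Pareto point above $t$'' and its membership of item $i$ depend on $w_i$, and in particular to rule out the possibility that two genuinely different shifted Pareto points could each realise $E_i$ over $w_i$-ranges whose union exceeds $\eps$. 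This is where one must invoke the structure of Pareto curves carefully, presumably via a case distinction organised around the behaviour of $\mathrm{win}(t)$ and $\mathrm{win}(t+\eps)$ with respect to item $i$.
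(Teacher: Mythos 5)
Your setup is sound — the reformulation via $\mathrm{win}(t)\neq\mathrm{win}(t+\eps)$, the union bound over the item $i$ that flips from $0$ to $1$, and the observation that on $E_i$ the solution $\mathrm{win}(t)$ equals the fixed solution $\ell_i=x^{\star,i}$ — but you stop exactly at the crux and flag it yourself: you never prove that, conditionally on $(w_j)_{j\neq i}$, the event $E_i$ confines $w_i$ to an interval of length at most $\eps$. That step is genuinely nontrivial in your formulation because $\mathrm{win}(t+\eps)$ is \emph{not} a fixed solution once the $w_j$ are conditioned on — as $w_i$ varies, the identity of $\mathrm{win}(t+\eps)$ changes, so you cannot just say the event is $\{w^T y\in(t,t+\eps]\}$ for a single fixed $y$ and invoke Lemma~\ref{lemma:epsIntervalEasy}. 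To rescue your route you would need a monotonicity/threshold argument: show that $\{\mathrm{win}(t)_i=0\}$ holds precisely for $w_i\ge\tau_1$, that $\{\mathrm{win}(t+\eps)_i=1\}$ holds precisely for $w_i\le\tau_2$, and then compare the two thresholds (using that the minimum $\sum_{j\neq i}w_jv_j$ over losers at the profit level of $\mathrm{win}(t+\eps)$ is at least that over losers at the lower profit level $p^T\ell_i$) to conclude $\tau_2-\tau_1\le\eps$. None of this is in your write-up.

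The paper's proof avoids this entirely by choosing a different auxiliary object. It defines $\hat{x}^i$ as the \emph{lightest} solution with $x_i=1$ whose profit exceeds $p^T x^{\star,i}$, and the key observation is that, conditioned on $(w_j)_{j\neq i}$, the \emph{identity} of $\hat{x}^i$ is fixed: the candidate set $\{x:x_i=1,\ p^Tx>p^Tx^{\star,i}\}$ is determined by the $p_j$ and $(w_j)_{j\neq i}$, and since every candidate contains item $i$, adding $w_i$ shifts all their weights equally and does not affect which one is lightest. Hence the event $\Lambda^i(t)=w^T\hat{x}^i-t\in(0,\eps]$ is literally $w_i\in(\alpha,\alpha+\eps]$ for a fixed $\alpha$, and Lemma~\ref{lemma:epsIntervalEasy} applies with no further case analysis. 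Your $E_i$ is in fact contained in $\{\Lambda^i(t)\in(0,\eps]\}$ (on $E_i$, $\hat{x}^i$ is a loser containing item $i$ with $t<w^T\hat{x}^i\le w^T\mathrm{win}(t+\eps)\le t+\eps$), so the paper's Lemma~\ref{lemma:Lambda2} is exactly the missing piece; the lesson is to pick as the auxiliary solution the lightest loser relative to $x^{\star,i}$, not $\mathrm{win}(t+\eps)$, so that the conditioning makes the solution itself deterministic.
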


The following lemma is the main building block in the proof of the theorem.
\begin{lemma}\label{lemma:ExXk}
For every~$k\in\NN$, $\Ex{X^k}\le n^2\phi+1$.
\end{lemma}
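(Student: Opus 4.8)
The plan is to write $X^k$ as a sum of indicator variables and apply linearity of expectation together with Lemma~\ref{lemma:ProbPOInterval}. By definition, $X^k = 1 + \sum_{i=0}^{k-1} \mathbf{1}[I_i^k \text{ is non-empty}]$, where the indicator $\mathbf{1}[I_i^k \text{ is non-empty}]$ is $1$ precisely when there exists a Pareto-optimal solution $x\in\calP$ with $w\tra x\in I_i^k$. Taking expectations and using linearity, I would obtain
\[
  \Ex{X^k} = 1 + \sum_{i=0}^{k-1} \Pr\big[I_i^k \text{ is non-empty}\big]
           = 1 + \sum_{i=0}^{k-1} \Pr\big[\exists x\in\calP \mid w\tra x\in I_i^k\big].
\]

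Next I would bound each term using Lemma~\ref{lemma:ProbPOInterval}. Since $I_i^k = (ni/k,\, n(i+1)/k]$ has the form $(t,t+\eps]$ with $t = ni/k \ge 0$ and $\eps = n/k > 0$, the lemma gives
\[
  \Pr\big[\exists x\in\calP \mid w\tra x\in I_i^k\big] \le n\phi\cdot\frac{n}{k} = \frac{n^2\phi}{k}.
\]
Summing this bound over the $k$ intervals yields $\sum_{i=0}^{k-1} n^2\phi/k = n^2\phi$, and hence $\Ex{X^k} \le 1 + n^2\phi = n^2\phi + 1$, as claimed.

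I do not expect any real obstacle in this particular argument: it is essentially a bookkeeping step that packages Lemma~\ref{lemma:ProbPOInterval} into a global bound via linearity of expectation, and the constant $n^2\phi$ is designed so that the per-interval bound $n^2\phi/k$ telescopes exactly. The genuinely hard part of the overall proof is Lemma~\ref{lemma:ProbPOInterval} itself, whose proof is deferred; here it may be invoked as a black box. One minor point worth a remark is that the bound is uniform in $k$, which is exactly what makes it useful: combined with Lemma~\ref{lemma:Fk} (whose probability bound tends to $0$ as $k\to\infty$) and the fact that $|\calP| = X^k$ whenever $\calF_k$ does not occur, letting $k\to\infty$ will transfer the bound $n^2\phi+1$ from $\Ex{X^k}$ to $\Ex{|\calP|}$ and thereby complete the proof of Theorem~\ref{thm:UpperPareto}.
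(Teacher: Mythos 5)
Your proof is correct and matches the paper's argument essentially line for line: decompose $X^k$ into indicator variables, apply linearity of expectation, and invoke Lemma~\ref{lemma:ProbPOInterval} with $\eps = n/k$ so the $k$-dependence cancels. Your closing remark about the bound being uniform in $k$ correctly anticipates how the lemma is used to finish Theorem~\ref{thm:UpperPareto}.
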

\begin{proof}
Let~$X^k_i$ denote a random variable that is~$1$ if the interval~$I_i^k$ is non-empty and~$0$ otherwise. Then
\[
   X^k = 1+\sum_{i=0}^{k-1} X^k_i
\]
and by linearity of expectation
\begin{equation}\label{eqn:Xk1}
   \Ex{X^k} = \Ex{1+\sum_{i=0}^{k-1} X^k_i} = 1+\sum_{i=0}^{k-1}\Ex{X^k_i}.
\end{equation}
Since~$X^k_i$ is a $0$-$1$-random variable, its expected value can be written as
\begin{equation}\label{eqn:Xk2}
  \Ex{X^k_i} = \Pr[X^k_i=1] = \Pr[\exists x\in\calP\mid w\tra x \in I_i^k].
\end{equation}

Using that each interval~$I_i^k$ has length~$n/k$, Lemma~\ref{lemma:ProbPOInterval} and~\eqref{eqn:Xk2} imply
\[
    \Ex{X^k_i} \le \frac{n^2\phi}{k}.
\]
Together with~\eqref{eqn:Xk1} this implies
\[
  \Ex{X^k} = 1+\sum_{i=0}^{k-1} \Ex{X^k_i} \le 1+k\cdot \frac{n^2\phi}{k} = n^2\phi+1.\qedhere
\]
\end{proof}

With the help of Lemmas~\ref{lemma:Fk} and~\ref{lemma:ExXk}, we can finish the proof of the theorem as follows:
\begin{align}
  \Ex{|\calP|} & = \sum_{i=1}^{2^n} \big(i \cdot \Pr[|\calP|=i]\big)\notag\\
               & = \sum_{i=1}^{2^n} \big(i \cdot \Pr[|\calP|=i\wedge \calF_{k}] + i \cdot \Pr[|\calP|=i\wedge \neg\calF_{k}]\big)\notag\\
               & \stackrel{(1)}{=} \sum_{i=1}^{2^n} \big(i \cdot \Pr[\calF_{k}]\cdot \Pr[|\calP|=i\mid \calF_{k}]\big) + \sum_{i=1}^{2^n}\big(i\cdot \Pr[X^k=i\wedge \neg\calF_{k}]\big)\notag\\
			   & \le \Pr[\calF_{k}]\cdot \sum_{i=1}^{2^n} \big(i \cdot  \Pr[|\calP|=i\mid \calF_{k}]\big) + \sum_{i=1}^{2^n}\big(i\cdot \Pr[X^k=i]\big)\notag\\
			   & \stackrel{(2)}{\le} \frac{2^{2n+1}n\phi}{k}\cdot \sum_{i=1}^{2^n} \big(2^n \cdot \Pr[|\calP|=i\mid \calF_{k}] \big) + \Ex{X^k}\notag\\
			   & \stackrel{(3)}{\le} \frac{2^{3n+1}n\phi}{k} + n^2\phi+1.\label{eqn:NumberPareto}
\end{align}
Let us comment on some of the steps in the previous calculation.
\begin{itemize}
\item The upper bound~$2^n$ on the indices of the sums follows because $|\calP|$ can never exceed the total number of solutions, which is~$2^n$.
\item The rewriting of the first term in (1) follows from the definition of the conditional probability
          and the rewriting of the second term follows because~$X^k=|\calP|$ when the event~$\neg\calF_k$ occurs.
\item (2) follows from Lemma~\ref{lemma:Fk} and the definition of the expected value.
\item (3) follows from the identity~$\sum_{i=1}^{2^n}\Pr[|\calP|=i\mid \calF_{k}]=1$ and Lemma~\ref{lemma:ExXk}.
\end{itemize}

Since~\eqref{eqn:NumberPareto} holds for every~$k\in\NN$, it must be~$\Ex{|\calP|} \le n^2\phi+1$.
\end{proof}
It only remains to prove Lemma~\ref{lemma:ProbPOInterval}. An easy way to derive an upper bound for the probability that there exists a Pareto-optimal solution in the interval~$(t,t+\eps]$ is to apply a union bound over all solutions. Since there is an exponential number of solutions, this does not lead to a useful bound. The key improvement in the proof of Lemma~\ref{lemma:ProbPOInterval} is to apply the union bound only over the $n$ dimensions.

\begin{proof}[Proof of Lemma~\ref{lemma:ProbPOInterval}]

\begin{figure}
\centering
  \epsfig{file=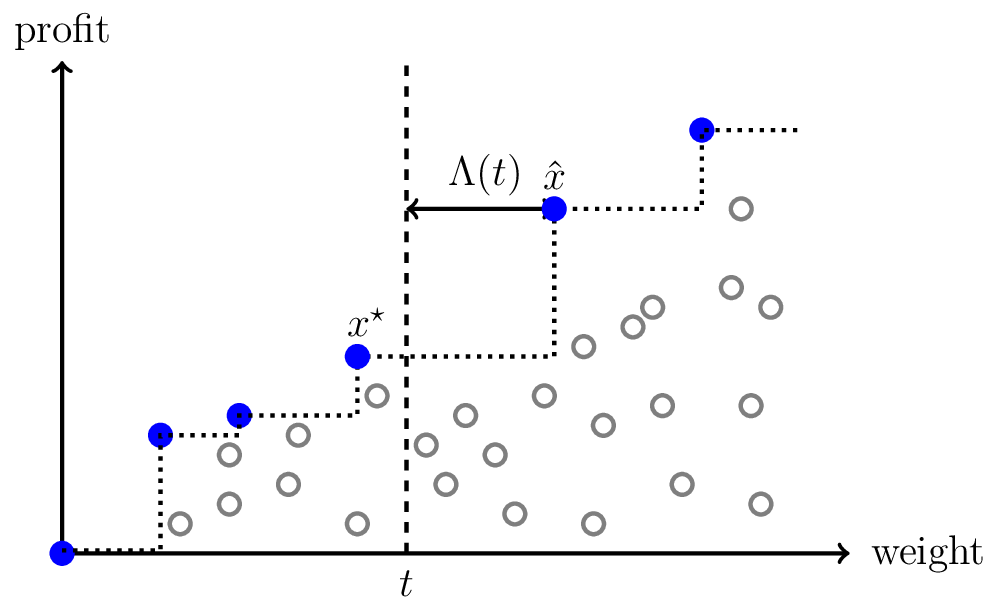,width=.6\textwidth}
  \caption{Definitions of the winner $x^{\star}$, the loser $\hat{x}$, and the
  random variable $\Lambda(t)$.}
  \label{fig:WinnerLoser}
\end{figure}
Fix~$t\ge 0$ and~$\eps>0$. First of all we define a random variable~$\Lambda(t)$.
In order to define~$\Lambda(t)$, we define the \emph{winner}~$x^{\star}$ to be
the most valuable solution satisfying~$w\tra x \le t$, i.e.,
\[
  x^{\star} = \argmax\{p\tra x\mid x\in\{0,1\}^n\text{ and } w\tra x\le t\}.
\]
For~$t\ge0$, such a solution~$x^{\star}$ must always exist. We say that a
solution~$x$ is a \emph{loser} if it has a higher profit than~$x^{\star}$. By the choice of~$x^{\star}$, losers
do not satisfy the constraint~$w\tra x\le t$ (hence their name). We denote by~$\hat{x}$
the loser with the smallest weight (see Figure~\ref{fig:WinnerLoser}), i.e.,
\[
  \hat{x} = \argmin\{w\tra x\mid x\in\{0,1\}^n\text{ and } p\tra x> p\tra x^{\star}\}.
\]
If there does not exist a solution~$x$ with~$p\tra x>p\tra x^{\star}$, then~$\hat{x}$ is undefined,
which we denote by~$\hat{x}=\perp$. Based on~$\hat{x}$, we define the random
variable~$\Lambda(t)$ as
\[
  \Lambda(t) = \begin{cases}
                 w\tra\hat{x}-t & \text{if } \hat{x}\neq\,\perp,\\
                 \infty & \text{if~$\hat{x}=\,\perp$.}
               \end{cases}
\]

The random variable~$\Lambda(t)$ satisfies the following equivalence:
\begin{equation}\label{eqn:Equivalence}
   \Lambda(t) \le \varepsilon \iff \exists x\in\calP\colon
  w\tra x\in(t,t+\varepsilon].
\end{equation} 
To see this, assume that there exists a Pareto-optimal solution whose weight lies in~$(t,t+\varepsilon]$,
and let~$y$ denote the Pareto-optimal solution
with the smallest weight in~$(t,t+\varepsilon]$. Then~$y=\hat{x}$ and
hence~$\Lambda(t)=w\tra\hat{x}-t \in (0,\varepsilon]$. Conversely, if~$\Lambda(t)\le\varepsilon$,
then~$\hat{x}$ must be a Pareto-optimal solution whose weight lies in the interval~$(t,t+\varepsilon]$.
Together this yields Equivalence~(\ref{eqn:Equivalence}).
Hence,
\begin{equation}\label{eqn:ExPareto}
  \Pr[\exists x\in\calP\mid w\tra x \in (t,t+\eps]]
   = \Pr[\Lambda(t)\le \eps].
\end{equation}

It only remains to bound the probability that $\Lambda(t)$ does not
exceed $\varepsilon$. In order to analyze this probability, we define a
set of auxiliary random variables~$\Lambda^1(t),\ldots,\Lambda^n(t)$ such that $\Lambda(t)$ is guaranteed
to always take a value also taken by at least one of the auxiliary random
variables. Then we analyze the auxiliary random variables and use a
union bound to conclude the desired bound for $\Lambda(t)$. 

Let $i\in[n]$ be fixed.
The random variable $\Lambda^i(t)$ is defined similarly to $\Lambda(t)$, but
only solutions that do not contain item~$i$ are eligible as
winners and only solutions that contain item~$i$ are eligible as losers. We make this more formal in the following.
For $j\in\{0,1\}$, we define
\[
\calS^{x_i=j}=\{x\in\{0,1\}^n\mid x_i=j\},
\]
and we define $x^{\star,i}$ to be
\[
  x^{\star,i} = \argmax\{p\tra x\mid x\in\calS^{x_i=0}\text{ and } w\tra x\le t\}.
\]
That is, $x^{\star,i}$ is the winner among the solutions that do not
contain item~$i$. We restrict our attention to losers that
contain item~$i$ and define
\[
  \hat{x}^i = \argmin\{w\tra x\mid x\in\calS^{x_i=1}\text{ and } p\tra x>
  p\tra x^{\star,i}\}.
\]
If there does not exist a solution $x\in\calS^{x_i=1}$ with $p\tra
x>p\tra x^{\star,i}$, then $\hat{x}^i$ is undefined, i.e.,
$\hat{x}^i=\,\perp$. Based on $\hat{x}^i$, we define the random variable
$\Lambda^i(t)$ as
\[
  \Lambda^i(t) = \begin{cases}
                 w\tra\hat{x}^i-t & \text{if } \hat{x}^i\neq\,\perp,\\
                 \infty & \text{if $\hat{x}^i=\,\perp$.}
               \end{cases}
\]

\begin{lemma}\label{lemma:Lambda}
For every choice of profits and weights, either $\Lambda(t)=\,\infty$ or
there exists an index $i\in[n]$ such that $\Lambda(t)=\Lambda^i(t)$.
\end{lemma}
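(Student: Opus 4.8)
The plan is to assume $\Lambda(t)\neq\infty$, so that the loser $\hat{x}$ exists (recall that the winner $x^{\star}$ always exists for $t\ge 0$), and then to exhibit a single coordinate $i$ that witnesses the equality $\Lambda(t)=\Lambda^i(t)$, namely a coordinate $i$ with $x^{\star}_i=0$ and $\hat{x}_i=1$.

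First I would show that such a coordinate must exist. Suppose for contradiction that every coordinate with $\hat{x}_i=1$ also has $x^{\star}_i=1$; equivalently, $\hat{x}\le x^{\star}$ coordinatewise. Since all weights are non-negative, this gives $w\tra\hat{x}\le w\tra x^{\star}\le t$, where the last inequality holds because $x^{\star}$ is by definition a solution of weight at most $t$. But $\hat{x}$ is a loser, and losers violate the constraint $w\tra x\le t$, i.e., $w\tra\hat{x}>t$ --- a contradiction. Hence a coordinate $i$ with $x^{\star}_i=0$ and $\hat{x}_i=1$ exists; fix it.

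Next I would verify that for this $i$ the restricted winner has the same profit as the global winner, $p\tra x^{\star,i}=p\tra x^{\star}$. On one hand, $x^{\star}$ itself lies in $\calS^{x_i=0}$ and satisfies $w\tra x^{\star}\le t$, so it is a candidate in the optimization defining $x^{\star,i}$, giving $p\tra x^{\star,i}\ge p\tra x^{\star}$; on the other hand, the feasible set $\{x\in\calS^{x_i=0}\mid w\tra x\le t\}$ defining $x^{\star,i}$ is contained in the feasible set $\{x\mid w\tra x\le t\}$ defining $x^{\star}$, giving $p\tra x^{\star,i}\le p\tra x^{\star}$. Consequently the set of losers eligible for $\Lambda^i(t)$, namely $\{x\in\calS^{x_i=1}\mid p\tra x>p\tra x^{\star,i}\}=\{x\in\calS^{x_i=1}\mid p\tra x>p\tra x^{\star}\}$, consists exactly of the global losers that contain item $i$. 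In particular it contains $\hat{x}$, since $\hat{x}_i=1$ and $p\tra\hat{x}>p\tra x^{\star}$; hence $\hat{x}^i\neq\,\perp$.

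Finally, since this eligible-loser set for $\Lambda^i(t)$ is a subset of the set of all losers, and $\hat{x}$ attains the minimum weight over all losers while lying in the subset, $\hat{x}$ also attains the minimum weight over the subset. Therefore $w\tra\hat{x}^i=w\tra\hat{x}$, and thus $\Lambda^i(t)=w\tra\hat{x}^i-t=w\tra\hat{x}-t=\Lambda(t)$, as desired. The only genuinely non-routine step is the first one: spotting that a coordinate with $x^{\star}_i=0$ and $\hat{x}_i=1$ must exist, and that this is exactly where the non-negativity of the weights is used (without it the lemma can fail). Everything afterwards is a matter of "optimizing over a subset" monotonicity arguments, and no tie-breaking convention is needed, because only the \emph{profit} of $x^{\star,i}$ and the \emph{weight} of $\hat{x}^i$ enter the conclusion, and both are pinned down as values.
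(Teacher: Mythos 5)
Your proof is correct and takes essentially the same route as the paper: locate a coordinate $i$ with $x^{\star}_i=0$ and $\hat{x}_i=1$ (using non-negativity of the weights), then argue that restricting to $\calS^{x_i=0}$ and $\calS^{x_i=1}$ does not change the relevant values. The one small refinement you make — phrasing the conclusion in terms of $p\tra x^{\star,i}=p\tra x^{\star}$ and $w\tra\hat{x}^i=w\tra\hat{x}$ rather than asserting $x^{\star,i}=x^{\star}$ and $\hat{x}^i=\hat{x}$ as solutions — neatly sidesteps any ambiguity in how the $\argmax$/$\argmin$ break ties, which the paper's proof leaves implicit; but this is a matter of exposition, not a different argument.
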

\begin{proof}
Assume that $\Lambda(t)\neq\,\infty$. Then there exist a winner $x^{\star}$ and
a loser $\hat{x}$. Since $x^{\star}\neq\hat{x}$, there must be an index
$i\in[n]$ with $x^{\star}_i\neq\hat{x}_i$. Since all weights are non-negative and
$w\tra x^{\star} < w\tra\hat{x}$, there must even be an index $i\in[n]$
with $x^{\star}_i=0$ and $\hat{x}_i=1$. We claim that for this index~$i$,
$\Lambda(t)=\Lambda^i(t)$. In order to see this, we first observe that
$x^{\star}=x^{\star,i}$. This follows because $x^{\star}$ is the solution with the
highest profit among all solutions with weight at most $t$. Since
it belongs to $\calS^{x_i=0}$, it is in particular the solution with the
highest profit among all solutions that do not contain item~$i$ and have weight at most~$t$.
Since $x^{\star}=x^{\star,i}$, by similar
arguments it follows that $\hat{x}=\hat{x}^i$. This directly implies
that $\Lambda(t)=\Lambda^i(t)$.
\end{proof}

\begin{lemma}\label{lemma:Lambda2}
For every $i\in[n]$ and every $\varepsilon\ge 0$,
\[
   \Pr[\Lambda^i(t)\in(0,\varepsilon]] \le \phi\varepsilon.
\]
\end{lemma}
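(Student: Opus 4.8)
The plan is to use the principle of deferred decisions: fix $i\in[n]$ and $\eps\ge 0$, condition on the outcomes of all weights $w_j$ with $j\ne i$, and treat $w_i$ as the only remaining source of randomness. I will argue that once the other weights are fixed, the identity of the solution $\hat{x}^i$ is completely determined, independently of $w_i$; then $\Lambda^i(t)$ becomes an affine function of $w_i$ and the claim follows from Lemma~\ref{lemma:epsIntervalEasy}.

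First I would observe that $x^{\star,i}$ does not depend on $w_i$: it is the maximum-profit solution among those in $\calS^{x_i=0}$ with $w\tra x\le t$, and every such solution has $x_i=0$, so its weight $w\tra x=\sum_{j\ne i}w_jx_j$ is already determined by the conditioning. Consequently the set of eligible losers $\{x\in\calS^{x_i=1}\mid p\tra x>p\tra x^{\star,i}\}$ is also determined, since membership in it depends only on the (fixed) profits and on the (now fixed) value $p\tra x^{\star,i}$.

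The key step is then to show that, among these eligible losers, the one of minimum weight, $\hat{x}^i$, is determined independently of $w_i$. Indeed, every eligible loser $x$ satisfies $x_i=1$, so $w\tra x=w_i+\sum_{j\ne i}w_jx_j$; the term $w_i$ is common to all of them, hence the loser minimizing $w\tra x$ is exactly the loser minimizing $\sum_{j\ne i}w_jx_j$, a quantity not involving $w_i$. Therefore, after conditioning on $(w_j)_{j\ne i}$, either there is no eligible loser, in which case $\hat{x}^i=\perp$ and $\Lambda^i(t)=\infty\notin(0,\eps]$, or $\hat{x}^i$ equals some fixed solution $z$ with $z_i=1$, in which case $\Lambda^i(t)=w\tra z-t=w_i+\bigl(\sum_{j\ne i}w_jz_j-t\bigr)=:w_i+c_z$ for a constant $c_z$. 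The event $\Lambda^i(t)\in(0,\eps]$ is then precisely the event $w_i\in(-c_z,-c_z+\eps]$, an interval of length $\eps$, which by Lemma~\ref{lemma:epsIntervalEasy} has probability at most $\phi\eps$.

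Finally I would remove the conditioning: since the conditional probability of $\Lambda^i(t)\in(0,\eps]$ given any outcome of $(w_j)_{j\ne i}$ is at most $\phi\eps$ (either $0$ or bounded as above), the law of total probability yields $\Pr[\Lambda^i(t)\in(0,\eps]]\le\phi\eps$. The one point that requires care is the claim that $\hat{x}^i$ is independent of $w_i$ — this is exactly the reason the auxiliary random variables were defined so that all eligible losers contain item~$i$ and the winner excludes it, and I would make sure to spell that argument out explicitly rather than leaving it implicit.
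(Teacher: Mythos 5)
Your proof is correct and follows essentially the same approach as the paper's: condition on $(w_j)_{j\ne i}$, observe that $x^{\star,i}$, the loser set, and the identity of $\hat{x}^i$ are all determined independently of $w_i$, and then apply Lemma~\ref{lemma:epsIntervalEasy} to the event that $w_i$ falls in a fixed interval of length $\eps$. In fact you spell out the key step — that $w_i$ contributes the same amount to every eligible loser, so the argmin is unaffected — more explicitly than the paper does, which is a nice touch.
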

\begin{proof}
In order to prove the lemma, it suffices to exploit the randomness of
the weight~$w_i$. We apply the principle of deferred decisions and assume that all other weights are fixed
arbitrarily. Then the weights of all solutions from $\calS^{x_i=0}$ and
hence also the solution $x^{\star,i}$ are fixed because $w_i$ does not influence the solutions in $\calS^{x_i=0}$ and the profits $p_1,\ldots, p_n$ are fixed. If the solution $x^{\star,i}$
is fixed, then also the set of losers $\calL=\{x\in\calS^{x_i=1}\mid p\tra x > p\tra x^{\star,i}\}$ is fixed.
Since, by definition, all solutions from~$\calL$ contain item~$i$ the identity of
the solution~$\hat{x}^i$ does not depend on~$w_i$. (Of course, the weight~$w\tra\hat{x}^i$ depends on~$w_i$. However, which solution will become~$\hat{x}^i$ is independent of~$w_i$.)
This implies that, given the
fixed values of the weights $w_j$ with $j\neq i$, we can rewrite the
event $\Lambda^i(t)\in(0,\varepsilon]$ as $w\tra \hat{x}^i-t\in(0,\varepsilon]$ for a fixed solution $\hat{x}^i$. For a
constant $\alpha\in\RR$ depending on the fixed values of the weights
$w_j$ with $j\neq i$, we can rewrite this event as
$w_i\in(\alpha,\alpha+\varepsilon]$. By
Lemma~\ref{lemma:epsIntervalEasy}, the probability of this
event is bounded from above by~$\phi\varepsilon$.
\end{proof}

Combining Lemmas~\ref{lemma:Lambda} and~\ref{lemma:Lambda2} yields
\[
  \Pr[\Lambda(t)\le\varepsilon] \le
  \Pr[\exists i\in[n]\colon\Lambda^{i}(t)\in(0,\varepsilon]] \le
  \sum_{i=1}^n\Pr[\Lambda^{i}(t)\in(0,\varepsilon]] \le n\phi \varepsilon.
\]
Together with~\eqref{eqn:ExPareto} this proves the lemma.
\end{proof}

Theorem~\ref{thm:UpperPareto} implies the following result on the running time of the
Nemhauser-Ullmann algorithm.
\begin{corollary}\label{cor:NUAlgo}
Consider an instance~$\calI$ of the knapsack problem with arbitrary profits~$p_1,\ldots,p_n\in\RRp$ in which every weight~$w_i$ is chosen uniformly at random from an arbitrary interval~$A_i\subseteq[0,1]$ of length~$1/\phi$ independently of the other weights. Then the expected running time of the Nemhauser-Ullmann algorithm is~$O(n^3 \phi)$.
\end{corollary}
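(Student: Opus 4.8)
The plan is to derive the corollary by combining the exact running-time characterization of Theorem~\ref{thm:NemhauserUllmannRunningTime} with the smoothed bound of Theorem~\ref{thm:UpperPareto}, the crucial point being that the latter applies not only to the full instance but separately to each of its prefixes. Theorem~\ref{thm:NemhauserUllmannRunningTime} provides an implementation of the Nemhauser-Ullmann algorithm whose running time is $\Theta\!\left(\sum_{i=0}^{n-1}|\calP_i|\right)$, where $\calP_i$ is the Pareto set of the restricted instance $\calI_i$ consisting of the first $i$ items. Since this running time is, up to constant factors, the random variable $\sum_{i=0}^{n-1}|\calP_i|$, it suffices to bound its expectation over the random choice of the weights.

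The key observation is that each restricted instance $\calI_i$ is itself an instance of the smoothed knapsack model of Theorem~\ref{thm:UpperPareto}: its profits $p_1,\ldots,p_i$ are arbitrary nonnegative numbers, and its weights $w_1,\ldots,w_i$ are independent, each $w_j$ being uniform on an interval $A_j\subseteq[0,1]$ of length $1/\phi$; independence is inherited because the joint law of $(w_1,\ldots,w_i)$ is a marginal of the joint law of $(w_1,\ldots,w_n)$. Applying Theorem~\ref{thm:UpperPareto} to $\calI_i$, which has $i$ items, therefore yields $\Ex{|\calP_i|}\le i^2\phi+1\le n^2\phi+1$ for every $i\in\{0,1,\ldots,n-1\}$ (the case $i=0$ being trivial, since $|\calP_0|=1$).

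Finally I would apply linearity of expectation:
\[
  \Ex{\sum_{i=0}^{n-1}|\calP_i|}=\sum_{i=0}^{n-1}\Ex{|\calP_i|}\le n\,(n^2\phi+1)=n^3\phi+n,
\]
and since $\phi\ge 1$ this is at most $2n^3\phi$, so the expected running time is $O(n^3\phi)$.

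I do not expect any genuine obstacle here; the only step that needs a moment's justification is the observation that every prefix instance $\calI_i$ again satisfies the hypotheses of Theorem~\ref{thm:UpperPareto}, which lets the theorem be invoked $n$ times rather than once. Everything else is linearity of expectation together with the trivial estimate $i\le n$.
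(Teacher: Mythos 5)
Your argument is correct and is essentially the paper's own proof: invoke Theorem~\ref{thm:NemhauserUllmannRunningTime} to reduce to bounding $\Ex{\sum_{i=0}^{n-1}|\calP_i|}$, apply Theorem~\ref{thm:UpperPareto} to each prefix instance $\calI_i$, and sum by linearity of expectation. The only cosmetic difference is that you coarsen $i^2\phi+1$ to $n^2\phi+1$ before summing, whereas the paper keeps $\sum_{i=0}^{n-1}(i^2\phi+1)$; both give $O(n^3\phi)$.
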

\begin{proof}
It follows from Theorem~\ref{thm:NemhauserUllmannRunningTime} that
the expected running time of the Nemhauser-Ullmann algorithm is bounded from above by
\[
  O\left(\Ex{\sum_{i=0}^{n-1}|\calP_i|}\right),
\]
where~$\calP_i$ denotes the Pareto set of the restricted instance that consists only of the
first~$i$ items. Using linearity of expectation and Theorem~\ref{thm:UpperPareto}, we obtain that
this term is bounded from above by
\[
  O\left(\sum_{i=0}^{n-1}\Ex{|\calP_i|}\right)
  = O\left(\sum_{i=0}^{n-1}(i^2\phi+1)\right)
  = O(n^3\phi).\qedhere
\]
\end{proof}

The decision to add randomness only to the weights is arbitrary. Of course if both the profits and the weights are chosen independently uniformly at random from intervals of length~$1/\phi$ then the upper bound still applies. With minor modifications, the analysis can also be adapted to the case that only the profits are random while the weights are adversarial.

\subsection{General Model}\label{subsec:GeneralModel}

Theorem~\ref{thm:UpperPareto} can be extended in several ways. First of all, the noise model can be generalized to a much wider class of distributions. In fact the only property that we used about the random weights is Lemma~\ref{lemma:epsIntervalEasy}, which says that the probability to fall into any interval of length~$\eps$ is at most~$\phi\eps$. This is true for every random variable that is described by a probability density function that is bounded from above by~$\phi$. Hence instead of allowing the adversary to choose an interval of length~$1/\phi$ for each weight~$w_i$, we could also allow him to choose a density function~$f_i:[0,1]\to[0,\phi]$ according to which~$w_i$ is chosen independently of the other weights. This includes as a special case the uniform distribution in an interval of length~$1/\phi$ but it also allows different types of random noise. Observe that we have restricted the density functions to~$[0,1]$ to normalize the weights.

In the following we will use the term \emph{$\phi$-perturbed random variable} to refer to a random variable described by a density~$f:\RR\to[0,\phi]$. If we replace all occurrences of Lemma~\ref{lemma:epsIntervalEasy} in the proof of Theorem~\ref{thm:UpperPareto} by the following lemma then Theorem~\ref{thm:UpperPareto} follows also for general $\phi$-perturbed weights from~$[0,1]$.

\begin{lemma}\label{lemma:epsInterval}
Let~$X$ be a $\phi$-perturbed random variable that is described by a density function~$f:[0,1]\to[0,\phi]$. For any interval~$I$ of length~$\eps$, $\Pr[X\in I] \le \phi\eps$.
\end{lemma}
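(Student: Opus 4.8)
The plan is to write the probability as an integral of the density over the interval and then bound the integrand pointwise by~$\phi$. First I would recall that, since $X$ is a continuous random variable with density $f$ supported on~$[0,1]$, for any (measurable) interval~$I$ we have $\Pr[X\in I] = \int_{I\cap[0,1]} f(x)\,dx$; restricting the domain of integration to $I\cap[0,1]$ is legitimate because $f$ is only defined on, and all of the probability mass of~$X$ lies in, the interval~$[0,1]$.

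Next I would invoke the defining property of a $\phi$-perturbed random variable, namely $f(x)\le\phi$ for all~$x$, to bound the integrand. This yields $\int_{I\cap[0,1]} f(x)\,dx \le \int_{I\cap[0,1]} \phi\,dx = \phi\cdot|I\cap[0,1]| \le \phi\cdot|I| = \phi\eps$, where the last inequality uses $|I\cap[0,1]|\le|I|=\eps$. Chaining these (in)equalities gives the claim.

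I do not expect any genuinely hard step here: the argument is a one-line calculation, and the only mild subtlety — hardly an obstacle — is remembering to intersect with~$[0,1]$, which can only shrink the region of integration and hence does not affect the bound. This is precisely the density-function analogue of the proof of Lemma~\ref{lemma:epsIntervalEasy}, with the ratio $|A\cap I|/|A|$ there replaced by the integral $\int_{I\cap[0,1]}f$; indeed the uniform distribution on an interval~$A$ of length~$1/\phi$ used there is exactly the special case $f\equiv\phi$ on~$A$ and $f\equiv 0$ elsewhere.
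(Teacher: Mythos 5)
Your proof is correct and is essentially identical to the paper's one-line argument: write $\Pr[X\in I]$ as the integral of $f$ over $I$, bound the integrand by $\phi$, and conclude. The only cosmetic difference is that you explicitly intersect $I$ with $[0,1]$, which the paper leaves implicit; this extra care does not change the argument.
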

\begin{proof}
The lemma follows by the following simple calculation:
\[
   \Pr[X\in I] = \int_{I} f(x)\,dx \le \int_{I} \phi \,dx = \phi\eps.\qedhere
\]
\end{proof}

Next we state an even more general version of Theorem~\ref{thm:UpperPareto}.
The first generalization compared to Theorem~\ref{thm:UpperPareto} is that an arbitrary set~$\calS\subseteq\{0,1\}^n$ of solutions is given. In the case of the knapsack problem, every vector from~$\{0,1\}^n$ is a solution, i.e.,~$\calS=\{0,1\}^n$. The second generalization is that the adversarial objective function~$p$ does not have to be linear. In fact, it can be an arbitrary function that maps every solution to some real value. The third generalization is that we extend the range of the $\phi$-perturbed weights from~$[0,1]$ to~$[-1,1]$.

\begin{theorem} \label{thm:UpperParetoGeneral}
Let~$\calS\subseteq\{0,1\}^n$ and~$p:\calS\to\RR$ be arbitrary. Let $w_1,\ldots,w_n$ be arbitrary $\phi$-perturbed numbers from the interval~$[-1,1]$. Then the expected number of solutions~$x\in\calS$ that are Pareto-optimal with respect to the objective functions~$p(x)$ and~$w\tra x$ is~$O(n^2\phi)$. This upper bound holds regardless of whether the objective functions are to be maximized or minimized.
\end{theorem}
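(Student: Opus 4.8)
The plan is to re-run the proof of Theorem~\ref{thm:UpperPareto} almost verbatim, replacing Lemma~\ref{lemma:epsIntervalEasy} by Lemma~\ref{lemma:epsInterval} throughout, and to reinforce the single step --- the analogue of Lemma~\ref{lemma:ProbPOInterval} --- that actually uses the sign of the weights. Three harmless reductions come first. Because $-w_i$ is again a $\phi$-perturbed number in $[-1,1]$ and $p$ ranges over \emph{all} functions, replacing $p$ by $-p$ and/or $w$ by $-w$ reduces the four sign patterns to ``maximize $p(x)$, minimize $w\tra x$''. Given an arbitrary $\calS$, extend $p$ to $\{0,1\}^n$ by assigning every $x\notin\calS$ a profit strictly below $\min_{y\in\calS}p(y)$; one checks that the Pareto-optimal solutions of the extended instance lying in $\calS$ are precisely those of $(\calS,p,w)$ and that the extended instance has at most one further Pareto-optimal solution (a minimum-weight solution outside $\calS$), so we may assume $\calS=\{0,1\}^n$ and in particular $0^n\in\calS$.

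Now the scaffolding of the proof of Theorem~\ref{thm:UpperPareto} carries over. All solution weights lie in $[-n,n]$; partition $(-n,n]$ into $k$ intervals of length $2n/k$. The bound $\Pr[\calF_k]\le\mathrm{poly}(2^n)/k$ follows exactly as in Lemma~\ref{lemma:Fk} from a union bound over pairs of solutions together with Lemma~\ref{lemma:epsInterval}, and copying Lemma~\ref{lemma:ExXk} and the concluding calculation, everything reduces to showing that for every $t\in\RR$ and every $\eps>0$,
\[
  \Pr\big[\exists x\in\calP:\ w\tra x\in(t,t+\eps]\big]\ =\ O(n\phi\eps);
\]
summing over the $k$ intervals (total length $2n$) then yields $\Ex{X^k}=O(n^2\phi)$, and letting $k\to\infty$ removes the $\calF_k$ term. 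I would prove this for $t\ge0$ as below; the case $t<0$ is symmetric (negate $w$ and complement the solutions, or rerun the same argument, the existence of a solution of weight $\le t$ failing only in an easily handled boundary case).

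So fix $t\ge0$. As in Lemma~\ref{lemma:ProbPOInterval}, let the winner be $x^\star=\argmax\{p(x)\mid w\tra x\le t\}$ (which exists since $0^n$ is feasible), let $\hat x=\argmin\{w\tra x\mid p(x)>p(x^\star)\}$, and let $\Lambda(t)=w\tra\hat x-t$, with $\Lambda(t)=\infty$ if no such $\hat x$ exists. As before the event in question is exactly $\{\Lambda(t)\le\eps\}$, and since every solution with profit above $p(x^\star)$ has weight above $t$, this equals $\{\Lambda(t)\in(0,\eps]\}$. Positivity entered in Lemma~\ref{lemma:Lambda}: with non-negative weights the winner and loser must differ in some coordinate $i$ with $x^\star_i=0$, $\hat x_i=1$, the pattern captured by the auxiliary variables $\Lambda^i(t)$; with weights in $[-1,1]$ it may instead happen that $\hat x\subsetneq x^\star$ --- $\hat x$ arises from $x^\star$ purely by deleting items of negative total weight --- and then no such coordinate exists. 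I would therefore keep $\Lambda^i(t)$, renamed $\Lambda^i_+(t)$ (winner among solutions with $x_i=0$, loser among solutions with $x_i=1$), and add a mirror family $\Lambda^i_-(t)$ (winner among solutions with $x_i=1$, loser among solutions with $x_i=0$). Coverage is then clean: whenever $\hat x$ exists it differs from $x^\star$ in some coordinate $i$; if $x^\star_i=0$, $\hat x_i=1$ then $\Lambda^i_+(t)=\Lambda(t)$ exactly as in the original proof, while if $x^\star_i=1$, $\hat x_i=0$ the symmetric check gives $\Lambda^i_-(t)=\Lambda(t)$ (here $x^\star$ is simultaneously the unrestricted winner and the winner among solutions with $x_i=1$, and $\hat x$ is the minimum-weight solution with $x_i=0$ whose profit exceeds $p(x^\star)$). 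Hence
\[
  \{\Lambda(t)\in(0,\eps]\}\ \subseteq\ \bigcup_{i\in[n]}\{\Lambda^i_+(t)\in(0,\eps]\}\ \cup\ \bigcup_{i\in[n]}\{\Lambda^i_-(t)\in(0,\eps]\}.
\]

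It remains to bound these $2n$ probabilities, and this is where the real work --- and the main obstacle --- lies. For $\Lambda^i_+(t)$ the argument of Lemma~\ref{lemma:Lambda2} applies without change and, crucially, without using the sign of $w_i$: once all $w_j$ with $j\ne i$ are fixed, the winner and hence the loser are determined independently of $w_i$, so $\Lambda^i_+(t)$ is an affine function of $w_i$ with slope $1$, and Lemma~\ref{lemma:epsInterval} gives $\Pr[\Lambda^i_+(t)\in(0,\eps]]\le\phi\eps$. For $\Lambda^i_-(t)$ this breaks: after fixing $w_j$ for $j\ne i$, the winner among solutions with $x_i=1$ (and hence the loser among solutions with $x_i=0$) can change with $w_i$, so $\Lambda^i_-(t)$ is only a monotone step function of $w_i$, not an affine one, and its level set $\{\Lambda^i_-(t)\in(0,\eps]\}$ need not have probability $O(\phi\eps)$. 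The plan for this case is to exploit the randomness of a \emph{different} coordinate: when $\Lambda^i_-(t)\in(0,\eps]$ the loser has weight in $(t,t+\eps]\subseteq(0,n]$ and hence (as $t\ge0$) contains some item $j\ne i$ with $w_j>0$; on the sub-event ``$j$ lies in the loser,'' conditioning on all weights except $w_j$ turns $\Lambda^i_-(t)$ into an affine slope-$1$ function of $w_j$, so Lemma~\ref{lemma:epsInterval} applies there too. The delicate point --- the technical heart of the argument --- is to organize this so that, after summing over $i$, the reversed family still contributes only $O(n\phi\eps)$ in total, rather than the $\Theta(n^2\phi\eps)$ a naive double union over the pairs $(i,j)$ would give, while simultaneously preserving the coverage above; I expect this accounting to be the crux.
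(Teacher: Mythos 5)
The paper does not actually prove Theorem~\ref{thm:UpperParetoGeneral}; it only remarks that the proof mirrors that of Theorem~\ref{thm:UpperPareto} and that the one non-straightforward modification is that for general $w_i$ the differing coordinate between $x^{\star}$ and $\hat{x}$ may have $x^{\star}_i=1$, $\hat{x}_i=0$ rather than the other way round. Your scaffolding --- replace Lemma~\ref{lemma:epsIntervalEasy} by Lemma~\ref{lemma:epsInterval}, redo Lemmas~\ref{lemma:Fk} and~\ref{lemma:ExXk}, reduce everything to a single-interval bound, introduce the mirrored family $\Lambda^i_-(t)$ --- tracks this remark exactly, and your coverage argument (that $\Lambda(t)=\Lambda^i_-(t)$ when $x^{\star}_i=1,\hat{x}_i=0$) is correct. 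So you have accurately located the one step the paper flags as non-trivial.

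But that step is precisely what your proposal leaves open, and the sketch you give for it does not go through as stated. You want to bound $\Pr[\Lambda^i_-(t)\in(0,\eps]]$ by exploiting the randomness of some $w_j$, $j\ne i$, with $j$ an item of the loser $\hat{x}^i_-$. The claim ``conditioning on all weights except $w_j$ turns $\Lambda^i_-(t)$ into an affine slope-$1$ function of $w_j$'' is not justified: the restricted winner $x^{\star,i}_-=\argmax\{p(x)\mid x_i=1,\,w\tra x\le t\}$ may itself contain item $j$, in which case the feasibility constraint $w\tra x\le t$ for candidates with $x_i=1$ still depends on $w_j$, so neither the winner nor (therefore) the loser is fixed by the conditioning, and $\Lambda^i_-(t)$ remains a step function of $w_j$. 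Even if one restricts to a sub-event on which the winner is fixed, you need the identity of the loser to be independent of $w_j$, which again fails if some candidate losers contain $j$ and others do not. On top of this, you yourself flag that the naive double union bound over pairs $(i,j)$ would give $O(n^2\phi\eps)$ per interval of length $\eps$, hence $O(n^3\phi)$ overall rather than the claimed $O(n^2\phi)$, and you do not show how to avoid it. So the heart of the argument --- the analogue of Lemma~\ref{lemma:Lambda2} for the reversed family --- is missing, and what is sketched in its place does not, as written, yield the bound. (Minor: your reduction to $\calS=\{0,1\}^n$ is fine since every Pareto-optimal $x\in\calS$ of the original instance stays Pareto-optimal in the extended one, though the parenthetical ``at most one further Pareto-optimal solution'' needs all profits outside $\calS$ to be set equal; and the reduction is not really needed --- as the paper notes, the argument of Lemma~\ref{lemma:ProbPOInterval} never uses $\calS=\{0,1\}^n$, provided one sets $p(x^{\star})=-\infty$ when no feasible winner exists.)
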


We will not prove Theorem~\ref{thm:UpperParetoGeneral}, but let us remark that its proof is very similar to the proof of Theorem~\ref{thm:UpperPareto}. In fact we never used in the proof that~$\calS=\{0,1\}^n$ and that~$p$ is linear. The fact that all weights~$w_i$ are positive was only used to argue that there must be an index~$i$ with~$x^{\star}_i=0$ and~$\hat{x}_i=1$. For general~$w_i$, it could also be the other way round. Handling this issue is the only modification of the proof that is not completely straightforward.

To illustrate the power of Theorem~\ref{thm:UpperParetoGeneral}, let us discuss its implications on graph problems. For a given graph with $m$ edges $e_1,\ldots,e_m$, one can identify every vector $x\in\{0,1\}^m$ with a subset of edges $E'=\{e_i\mid x_i=1\}$. Then $x$ is the so-called incidence vector of the edge set $E'$. If, for example, there is a source vertex~$s$ and a target vertex~$v$ given, one could choose the set~$\calS$ of feasible solutions as the set of all incidence vectors of paths from $s$ to $v$ in the given graph. This way, Theorem~\ref{thm:UpperParetoGeneral} implies that the expected number of Pareto-optimal $s$-$v$-paths in the bicriteria shortest-path problem is $O(m^2\phi)$. Similarly, one could choose $\calS$ as the set of incidence vectors of all spanning trees of a given graph. Then the result implies that in expectation there are only $O(m^2\phi)$ Pareto-optimal spanning trees in the bicriteria spanning tree problem. In the traveling salesman problem (TSP) we are given an undirected graph with edge weights and the goal is to find a shortest tour (i.e., Hamiltonian cycle) that visits every vertex exactly once. As for the bicritera shortest path problem, Theorem~\ref{thm:UpperParetoGeneral} implies that in expectation there are only $O(m^2\phi)$ Pareto-optimal tours in the bicriteria version of the TSP.

For the Bicriteria Bellman-Ford algorithm we obtain the following corollary.

\begin{corollary}
Consider an instance of the bicriteria shortest-path problem with arbitrary costs and non-negative $\phi$-perturbed weights from the interval~$[0,1]$. Let~$n$ and~$m$ denote the number of vertices and edges, respectively. Then the expected running time of the Bicriteria Bellman-Ford algorithm is~$O\left(nm^3\phi\right)$.
\end{corollary}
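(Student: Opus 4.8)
The plan is to combine Theorem~\ref{thm:BellmannFordRunningTime} with Theorem~\ref{thm:UpperParetoGeneral}, exactly as Corollary~\ref{cor:NUAlgo} combined Theorem~\ref{thm:NemhauserUllmannRunningTime} with Theorem~\ref{thm:UpperPareto}. By Theorem~\ref{thm:BellmannFordRunningTime} the expected running time is
\[
  O\!\left(\Ex{\sum_{k=1}^{M}\left(|L_{u(R_k)}^{k-1}|+|L_{v(R_k)}^{k-1}|\right)}\right),
\]
where $M=(n-1)m$ is the number of relax operations. By linearity of expectation it suffices to bound $\Ex{|L_v^{k-1}|}$ for every vertex $v$ and every $k$, and then sum: there are $M=(n-1)m$ terms, each contributing two list sizes, so if every $\Ex{|L_v^{k-1}|}$ is $O(m^2\phi)$ we get $O(nm\cdot m^2\phi)=O(nm^3\phi)$, which is the claimed bound.

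The core step is therefore to show $\Ex{|L_v^{k-1}|}=O(m^2\phi)$ for each fixed $v$ and $k$. Recall from the discussion preceding Theorem~\ref{thm:BellmannFordRunningTime} that $L_v^{k-1}$ consists precisely of the $s$-$v$-paths in the set $S_v^{k-1}$ that are Pareto-optimal within $S_v^{k-1}$, where $S_v^{k-1}$ is the (deterministic, since the relax sequence is fixed) set of $s$-$v$-paths appearing as subsequences of the first $k-1$ relax operations. The key point is that the relax sequence, and hence $S_v^{k-1}$, does not depend on the random weights. So I would apply Theorem~\ref{thm:UpperParetoGeneral} with the solution set $\calS$ taken to be the set of incidence vectors (in $\{0,1\}^m$) of the paths in $S_v^{k-1}$, with the adversarial objective $p$ being the cost function $c(P)=\sum_{e\in P}c(e)$, and with the $\phi$-perturbed weights being the edge weights $w(e)\in[0,1]$. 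Theorem~\ref{thm:UpperParetoGeneral} then gives that the expected number of Pareto-optimal elements of $\calS$ is $O(m^2\phi)$, i.e. $\Ex{|L_v^{k-1}|}=O(m^2\phi)$.

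One detail to check is that a path is determined by its incidence vector among the paths in $S_v^{k-1}$ (so that ``Pareto-optimal paths in $S_v^{k-1}$'' and ``Pareto-optimal incidence vectors in $\calS$'' agree up to the tie-breaking that $\calP$ already allows); two distinct simple $s$-$v$-paths have distinct edge sets, so this is immediate, and in any case identifying paths that have the same weight and cost only decreases the count. The mild obstacle, and the only place one must be slightly careful, is that Theorem~\ref{thm:UpperParetoGeneral} is stated for a \emph{fixed} set $\calS$; here $\calS=\calS_v^{k-1}$ varies with $v$ and $k$, but since it is still a deterministic set independent of the random weights, the theorem applies verbatim to each choice. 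Summing the $O(m^2\phi)$ bound over the $O(nm)$ relax operations and their two endpoints each yields the expected running time $O(nm^3\phi)$, completing the proof.
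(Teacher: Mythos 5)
Your proof is correct and follows essentially the same approach as the paper: bound each $\Ex{|L_v^{k-1}|}$ by $O(m^2\phi)$ via Theorem~\ref{thm:UpperParetoGeneral}, then sum over the $M=(n-1)m$ relax operations using linearity of expectation and Theorem~\ref{thm:BellmannFordRunningTime}. The extra care you take in noting that $S_v^{k-1}$ is a deterministic set (because the relax sequence is fixed in advance) and that simple $s$-$v$-paths are determined by their incidence vectors are exactly the details the paper leaves implicit.
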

\begin{proof}
We can use Theorem~\ref{thm:UpperParetoGeneral} to bound the expected size of each list~$L_v^i$ that occurs throughout the algorithm by $O(m^2\phi)$, where~$m$ denotes the number of edges in the graph. Using linearity of expectation and Theorem~\ref{thm:BellmannFordRunningTime} yields that the expected running time is
\[
   \Theta\left(\sum_{i=1}^{M}\left(\Ex{|L_{u(R_i)}^{i-1}|}+\Ex{|L_{v(R_i)}^{i-1}|}\right)\right).
\]
Using that the expected length of each list is $O(m^2\phi)$ and $M=(n-1)\cdot m$ implies the claimed bound.
\end{proof}

Let us finally remark that Theorem~\ref{thm:UpperParetoGeneral} can also be adapted to the setting where the set~$\calS$ of feasible solutions is an arbitrary subset of $\{0,\ldots,k\}^n$ for some~$k\in\NN$. Then the expected number of Pareto-optimal solutions is $O(n^2k^2\phi)$. This is useful to model, for example, the bounded knapsack problem, in which a number of identical copies of every item is given.

\subsection{Multiobjective Optimization Problems}

Even though Theorem~\ref{thm:UpperParetoGeneral} is quite general, it still has one severe restriction: it only applies to optimization problems with two objective functions. The extension to optimization problems with more than two objectives is rather challenging and requires different methods. In this section, we summarize the main results.

In Theorem~\ref{thm:UpperParetoGeneral} one of the objective functions is assumed to be arbitrary while the other is linear with $\phi$-perturbed coefficients. We consider now optimization problems with one arbitrary objective function and~$d$ linear objective functions with~$\phi$-perturbed coefficients. \citet{RoeglinT09} were the first to study this model. They proved an upper bound of $O((n^2\phi)^{f(d)})$ for the expected number of Pareto-optimal solutions where~$f$ is a rapidly growing function (roughly~$2^d d!$). This has been improved by \citet{MoitraO12} to $O(n^{2d}\phi^{d(d+1)/2})$. \citet{BrunschR15} improved the upper bound further to $O(n^{2d}\phi^d)$ under the assumption that all density functions are unimodal, where a function~$f:\RR\to\RR$ is called unimodal if there exists some~$x\in\RR$ such that~$f$ is monotonically increasing on~$(-\infty,x]$ and monotonically decreasing on~$[x,\infty)$.

The $c$th moment of a random variable~$X$ is the expected value~$\Ex{X^c}$ if it exists. \citet{BrunschR15} also prove upper bounds on the moments of the number of Pareto-optimal solutions. In particular they show that for any constant~$c$ the $c$th moment is bounded from above by $O((n^{2d}\phi^{d(d+1)/2})^c)$ and $O((n^{2d}\phi^d)^c)$ for general and unimodal densities, respectively. Upper bounds for the moments give rise to non-trivial tail bounds. Consider the case~$d=1$. Then the $c$th moment is bounded from above by~$b_c(n^2\phi)^c$ for some constant~$b_c$ depending on~$c$. Applying Markov's inequality to the $c$th moment yields for every~$\alpha\ge 1$
\begin{align*}
  \Pr[|\calP| \ge \alpha\cdot (n^2\phi)]
  = \Pr[|\calP|^c \ge \alpha^c (n^2\phi)^c]
  = \Pr\bigg[|\calP|^c \ge \frac{\alpha^c}{b_c}\cdot b_c(n^2\phi)^c\bigg]
  \le \frac{b_c}{\alpha^c},
\end{align*}
while applying Markov's inequality directly to~$|\calP|$ yields only a bound of (roughly) $1/\alpha$. Upper bounds for the moments are also important for another reason: If the running time of an algorithm depends polynomially but not linearly on the number of Pareto-optimal solutions (like the running time of the Nemhauser-Ullmann algorithm for more than two objective functions), then Theorem~\ref{thm:UpperParetoGeneral} cannot be used to derive any bound on its expected running time. This is because a bound on~$\Ex{|\calP|}$ does not imply any bound on, for example, $\Ex{|\calP|^2}$. Only with the result of Brunsch and R\"oglin about the moments of~$|\calP|$ a polynomial bound follows for the expected running time of these algorithms.

Improving earlier work of \citet{BrunschGRR14}, \citet{BrunschPhD} shows lower bounds for the expected number of Pareto-optimal solutions of $\Omega(n^2\phi)$ for $d=1$ and $\Omega(n^{d-1.5}\phi^d)$ for~$d\ge 2$. Hence the upper bound in Theorem~\ref{thm:UpperParetoGeneral} for the bicriteria case is asymptotically tight. 

A $\phi$-perturbed number is non-zero with probability~1. This implies that each of the $d$ objective functions depends on all the variables. This limits the expressibility of the model because there are many examples of problems in which some objective function depends only on a certain subset of the variables. \citet{BrunschR15} discuss this subtle issue in more detail and they also give concrete examples. To circumvent this problem, they introduce \emph{zero-preserving perturbations}. In their model, the adversary can decide for each coefficient whether it should be a $\phi$-perturbed number or is deterministically set to zero. For this model they prove upper bounds of $O(n^{O(d^3)}\phi^d)$ and $O((n\phi)^{O(d^3)})$ for unimodal and general $\phi$-perturbed coefficients, respectively, for the expected number of Pareto-optimal solutions.

\section{Smoothed Complexity of Binary Optimization Problems}\label{sec:SmoothedComplexity}

The results on the expected number of Pareto-optimal solutions imply that the knapsack problem can be solved in expected polynomial time on instances with $\phi$-perturbed weights or $\phi$-perturbed profits (Corollary~\ref{cor:NUAlgo}). A natural question is whether or not similar results also hold for other NP-hard optimizations problems. Does, for example, the TSP admit an algorithm with expected polynomial running time if all distances are $\phi$-perturbed? Instead of studying each problem separately, we will now present a general characterization due to \citet{BeierV06} which combinatorial optimization problems can be solved efficiently on instances with $\phi$-perturbed numbers. 

While most smoothed analyses in the literature focus on the analysis of specific algorithms, this section instead considers \emph{problems} in the sense of complexity theory. We will study \emph{linear binary optimization problems}. In an instance of such a problem~$\Pi$, a linear objective function $c\tra x=c_1x_1+\cdots+c_nx_n$ is to be minimized or maximized over an arbitrary set~$\calS\subseteq\{0,1\}^n$ of feasible solutions. The problem~$\Pi$ could, for example, be the TSP and the coefficients~$c_i$ could be the edge lengths. (See also the discussion in Section~\ref{subsec:GeneralModel} on how graph problems can be encoded as binary optimization problems.) One could also encode the knapsack problem as a linear binary optimization problem. Then~$\calS$ contains all subsets of items whose total weight does not exceed the capacity.

We will study the \emph{smoothed complexity} of linear binary optimization problems, by which we mean the complexity of instances in which the coefficients~$c_1,\ldots,c_n$ are~$\phi$-perturbed numbers from the interval~$[-1,1]$. We will assume without loss of generality that the objective function~$c\tra x$ is to be minimized. Since $\phi$-perturbed numbers have infinite encoding length with probability~1, we have to discuss the machine model that we will use in the following. One could change the input model and assume that the $\phi$-perturbed coefficients are discretized by rounding them after a polynomial number, say $n^2$, of bits. The effect of this rounding is so small that it does not influence our results. We will, however, not make this assumption explicit and use, for the sake of simplicity, the continuous random variables in our probabilistic analysis. When defining the input size we will not take the encoding length of the coefficients~$c_i$ into account. Instead we assume that the coefficients~$c_1,\ldots,c_n$ contribute in total only~$n$ to the input length.

To state the main result, let us recall two definitions from computational complexity. We call a linear binary optimization problem \emph{strongly NP-hard} if it is already NP-hard when restricted to instances with integer coefficients~$c_i$ in which the largest absolute value~$C:=\max_i|c_i|$ of any of the coefficients is bounded by a polynomial in the input length. The TSP is, for example, strongly NP-hard because it is already NP-hard when all edges have length either~1 or~2. The knapsack problem, on the other hand, is not strongly NP-hard because instances in which all profits are integers and polynomially bounded in the input size can be solved by dynamic programming in polynomial time.

A language~$L$ belongs to the complexity class~ZPP (zero-error probabilistic polynomial time) if there exists a randomized algorithm~$A$ that decides for each input~$x$ in expected polynomial time whether or not~$x$ belongs to~$L$. That is,~$A$ always produces the correct answer but its running time is a random variable whose expected value is bounded polynomially for every input~$x$. Let us point out that the expectation is only with respect to the random decisions of the algorithm and not with respect to a randomly chosen input. It is yet unclear whether or not P$=$ZPP. In any case, languages that belong to ZPP are generally considered to be easy to decide and NP-hard problems are believed to not lie in ZPP.

\begin{theorem}\label{thm:SmoothedComplexityLower}
Let~$\Pi$ be a linear binary optimization problem that is strongly NP-hard. Then there does not exist an algorithm for~$\Pi$ whose expected running time is polynomially bounded in~$N$ and~$\phi$ for instances with $\phi$-perturbed coefficients from~$[-1,1]$, where~$N$ denotes the input length, unless~$\text{NP}\subseteq\text{ZPP}$.
\end{theorem}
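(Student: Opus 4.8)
The plan is to argue by contradiction: assume there is an algorithm $A$ for $\Pi$ whose expected running time is bounded by a polynomial $q(N,\phi)$ on instances with $\phi$-perturbed coefficients from $[-1,1]$, and use $A$ to place an NP-hard problem into ZPP. Since $\Pi$ is strongly NP-hard, it is already NP-hard when restricted to instances with \emph{integer} coefficients $c_i$ satisfying $C:=\max_i|c_i|\le P(N)$ for some fixed polynomial $P$, where $N$ is the input length. It therefore suffices to show that such integer instances can be solved exactly by a zero-error randomized algorithm running in expected polynomial time.

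First I would describe how to turn a given integer instance $\calI$ (with feasible set $\calS\subseteq\{0,1\}^n$ and integer coefficients $c_1,\dots,c_n$, $|c_i|\le C$) into a random $\phi$-perturbed instance $\tilde\calI$ over the same feasible set $\calS$. Set $\phi:=2nC+1$, which is polynomial in $N$ because $C$ is, and let each perturbed coefficient $\tilde c_i$ be drawn independently and uniformly at random from the interval $[\,c_i/(2C),\ c_i/(2C)+1/\phi\,]$. Since $c_i/(2C)\in[-\tfrac12,\tfrac12]$ and $1/\phi\le\tfrac12$, this interval lies in $[-1,1]$, so the $\tilde c_i$ are legitimate $\phi$-perturbed numbers from $[-1,1]$ (their density is the constant $\phi$ on its support). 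In the input model of Section~\ref{sec:SmoothedComplexity} the coefficients contribute only $n$ to the input length, so the input length of $\tilde\calI$ is polynomially related to that of $\calI$; running $A$ on $\tilde\calI$ hence takes expected time $q(\mathrm{poly}(N),\mathrm{poly}(N))=\mathrm{poly}(N)$.

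The key deterministic observation is that, \emph{for every} realization of the noise, an optimal solution of $\tilde\calI$ is also an optimal solution of $\calI$. Indeed, for any $x,y\in\calS$ with $c\tra x<c\tra y$ we have $c\tra x\le c\tra y-1$ because the $c_i$ are integers, hence, writing $u_i\in[0,1/\phi]$ for the noise added to the $i$th coordinate,
\[
\tilde c\tra x-\tilde c\tra y=\frac{1}{2C}\bigl(c\tra x-c\tra y\bigr)+\sum_{i=1}^n u_i(x_i-y_i)\le-\frac{1}{2C}+\frac{n}{\phi}<0 .
\]
Thus the order of any two solutions with distinct original objective values is preserved, so any minimizer $x^\star$ of $\tilde c\tra x$ over $\calS$ must also minimize $c\tra x$ over $\calS$. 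Consequently, feeding $\tilde\calI$ to $A$ and returning the value $c\tra x^\star$ of the solution $x^\star$ it outputs yields, with certainty, $\min_{x\in\calS}c\tra x$; comparing this value with a given threshold decides the NP-hard decision version of $\Pi$.

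Putting the pieces together, this is a zero-error randomized algorithm — it is always correct, the only randomness being the perturbation we introduce and the internal coins of $A$ — with polynomially bounded expected running time, so it witnesses that the decision version of $\Pi$, and hence (via the reduction underlying its NP-hardness) every language in NP, lies in ZPP, contradicting $\mathrm{NP}\not\subseteq\mathrm{ZPP}$. I expect the only genuinely delicate point to be the calibration of the scaling factor $1/(2C)$ against the noise magnitude $1/\phi$: one must simultaneously keep the perturbed coefficients inside $[-1,1]$, keep $\phi$ polynomial in $N$, and still make the perturbation small enough that it cannot reorder two solutions whose integer objective values differ by at least one. Everything else — the comparison of the input lengths of $\calI$ and $\tilde\calI$ and the bookkeeping that turns a zero-error optimization procedure into a ZPP decision procedure — is routine.
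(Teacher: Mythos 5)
Your proposal is correct and follows exactly the strategy the paper sketches: take a worst-case integer instance of $\Pi$ with polynomially bounded coefficients, scale and perturb the coefficients so that they become legitimate $\phi$-perturbed numbers in $[-1,1]$ with $\phi$ polynomial in $N$, run the hypothetical algorithm $A$, and observe that the optimal solution of the perturbed instance is also optimal for the original one, which places the (NP-hard) decision version of $\Pi$ in ZPP. Your execution is in fact somewhat more careful than the paper's informal description: the paper states that ``$\phi=\Theta(C)$'' suffices for the noise not to change the optimal solution, whereas your calculation shows that the noise contribution to a pairwise difference can be as large as $n/\phi$ while the scaled integer gap is only $1/(2C)$, so one actually needs $\phi > 2nC$, i.e.\ $\phi = \Theta(nC)$; your choice $\phi = 2nC+1$ is the right one, and since $C$ is polynomial in $N$ this remains polynomial, so the conclusion is unaffected. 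The only cosmetic issue is the degenerate case $C=0$ (all coefficients zero), which makes the scaling $c_i/(2C)$ undefined; this is trivially handled by treating the all-zero instance separately or by replacing $C$ with $\max(1,\max_i|c_i|)$.
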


The main idea of the proof of this theorem can be summarized as follows: An algorithm~$A$ for~$\Pi$ with expected running time polynomial in~$N$ and~$\phi$ can be used to solve worst-case instances of~$\Pi$ with polynomially bounded numbers optimally in expected polynomial time. Given such a worst-case instance, one could add a small amount of random noise to all the numbers and then solve the resulting instance with~$A$ in expected time polynomial in~$N$ and~$\phi$. If this random noise is small enough ($\phi=\Theta(C)$) then it does not change the optimal solution. This way, we obtain an algorithm that solves worst-case instances with polynomially bounded numbers in expected polynomial time, implying that~$\text{NP}\subseteq\text{ZPP}$.

The previous theorem shows that $\phi$-perturbed instances of strongly NP-hard optimization problems are not easier to solve than worst-case instances. Hence, these problems stay hard also in the model of smoothed analysis. One consequence of this result is that there is no hope that the TSP can be solved efficiently when the edge lengths are randomly perturbed. This is in clear contrast to the knapsack problem, which is easy to solve on randomly perturbed inputs. We will now state a more general positive result. We say that a linear binary optimization problem~$\Pi$ can be solved in \emph{pseudo-linear time} if there exists an algorithm whose running time on instances with integer coefficients is bounded from above by~$p(N)\cdot C$, where~$p$ denotes a polynomial, $N$ denotes the input length, and~$C$ denotes the largest absolute value of any of the coefficients.

\begin{theorem}\label{thm:SmoothedComplexityUpper}
A linear binary optimization problem~$\Pi$ that can be solved in pseudo-linear time in the worst case can be solved in expected polynomial time (with respect to the input length and~$\phi)$ on instances with $\phi$-perturbed numbers from~$[-1,1]$.
\end{theorem}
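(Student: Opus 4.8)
The plan is to reduce the $\phi$-perturbed instance, by adaptively rounding the coefficients, to a short sequence of worst-case instances with polynomially bounded integer coefficients, each solved by the assumed pseudo-linear algorithm $A$; recall that $A$ solves any integer instance with largest coefficient $C$ in time $p(N)\cdot C$. Write $x^\star=\argmin_{x\in\calS}c\tra x$ (we minimize $c\tra x$, without loss of generality) and let $\Delta:=\min\{c\tra x-c\tra x^\star\mid x\in\calS,\ x\neq x^\star\}$ be the \emph{winner gap}. As permitted by the machine-model discussion above, I assume each $c_i$ is given with at most $n^2$ bits after the binary point, so that $\Delta$ is either $0$ or at least $2^{-n^2}$. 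Two ingredients are then needed: an algorithm that outputs a \emph{certified} optimum, and a tail bound on $\Delta$ that guarantees the certificate is reached after few rounds.

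\emph{Algorithm and correctness.} For $b=1,2,\dots,n^2$ in turn, round $\bar c_i:=\lfloor 2^bc_i\rfloor\,2^{-b}$, scale by $2^b$ (integer coefficients of absolute value $\le 2^b$), and run $A$ to obtain the rounded optimum $\bar x_b$; then run $A$ another $n$ times, the $i$-th time forcing coordinate $i$ to $1-(\bar x_b)_i$ by replacing $c_i$ with a penalty of absolute value $\Theta(nC)$ (which keeps all coefficients polynomially bounded), so as to compute the second-smallest rounded value $V_b'$. Halt and output $\bar x_b$ as soon as $b=n^2$ or $V_b'-\bar c\tra\bar x_b> n\,2^{-b}$. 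Since $0\le c_i-\bar c_i<2^{-b}$ and solutions lie in $\{0,1\}^n$, every $x$ satisfies $|c\tra x-\bar c\tra x|<n2^{-b}$; hence a rounded gap exceeding $n2^{-b}$ implies $c\tra x>c\tra\bar x_b$ for all $x\neq\bar x_b$, i.e. $\bar x_b=x^\star$, and at $b=n^2$ the rounding is exact anyway. So the output is always optimal.

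\emph{Winner-gap tail bound.} The key estimate, proved exactly in the style of Lemma~\ref{lemma:ProbPOInterval}, is $\Pr[\Delta\le\eps]\le 2n\phi\eps$ for every $\eps\ge0$. Letting $y^\star$ be the best solution $\neq x^\star$, some coordinate $i$ has $x^\star_i\neq y^\star_i$, and for such $i$ one checks (as in Lemma~\ref{lemma:Lambda}, using that fixing one coordinate does not change which solution is the constrained optimum once the rest is fixed) that $\Delta$ equals $\Gamma^i:=|\min\{c\tra x\mid x_i=1\}-\min\{c\tra x\mid x_i=0\}|$; in fact $\Delta=\min_{i\in[n]}\Gamma^i$, so $\{\Delta\le\eps\}\subseteq\bigcup_i\{\Gamma^i\le\eps\}$. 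Deferring the decision on $c_i$ turns $\Gamma^i\le\eps$ into membership of $c_i$ in a length-$2\eps$ interval, which has probability $\le 2\phi\eps$ by Lemma~\ref{lemma:epsInterval}; a union bound over $i\in[n]$ finishes it. In particular $\Pr[\Delta=0]=\Pr[\Delta<2^{-n^2}]\le 2n\phi\,2^{-n^2}$, and this costs only a factor $n$, not a $2^{2n}$ union bound over pairs of solutions — which is exactly what makes the degenerate case affordable below.

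\emph{Putting it together, and the main obstacle.} Once $2^{-b}\le\Delta/(3n)$ one has $\bar x_b=x^\star$ and rounded gap $>n2^{-b}$, so the algorithm halts at round $\min(b^\star,n^2)$ where $b^\star=\lceil\log_2(3n/\Delta)\rceil$ (and $b^\star=\infty$ when $\Delta=0$). Each round costs $O(n)\cdot p(\mathrm{poly}(N))\cdot 2^b=\mathrm{poly}(N)\cdot 2^b$, and the rounds sum geometrically, so the total running time is $\mathrm{poly}(N)\cdot\min(6n/\Delta,2^{n^2})$; integrating the winner-gap tail, $\Ex{\min(6n/\Delta,2^{n^2})}\le\int_0^{2^{n^2}}\min(1,12n^2\phi/t)\,dt=O(n^4\phi)$, hence the expected running time is $\mathrm{poly}(N,\phi)$. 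The main obstacle is precisely this estimate: the natural certificate needs precision $\approx\log_2(1/\Delta)$, at which running $A$ costs $\Theta(\mathrm{poly}(N)/\Delta)$, and $\Ex{1/\Delta}$ is \emph{infinite} for continuous noise since $\Pr[\Delta\le\eps]\sim n\phi\eps$ sits exactly on the boundary of integrability. The two devices that save the day are capping $b$ at $n^2$ using the bounded input precision — which replaces the logarithmically divergent $\Ex{1/\Delta}$ by the harmless factor $n^2$ — and controlling the case $\Delta\approx0$ through the coordinate-wise (not solution-pair-wise) winner-gap bound, so that the exponentially small probability of a near-tie is not overwhelmed by the $2^{n^2}$ cost of running the recursion to its end.
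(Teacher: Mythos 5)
Your argument is correct and follows the same route as the paper's proof sketch: round the coefficients, solve the rounded instance with the assumed pseudo-linear algorithm, and control the required precision via the winner-gap tail bound $\Pr[\Delta\le\eps]\le 2n\phi\eps$ of Exercise~\ref{exer:isolation2}. The paper only sketches the Monte Carlo variant with a fixed $b=\Theta(\log n)$ and asserts without detail that it can be upgraded to an always-correct algorithm with expected polynomial running time; your adaptive-precision scheme with second-best certification and the cap $b\le n^2$ is exactly such an upgrade, worked out in full.
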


Let~$A_{\text{p}}$ be an algorithm that solves integral instances of~$\Pi$ in pseudo-linear time. In the proof of Theorem~\ref{thm:SmoothedComplexityUpper}, the algorithm~$A_{\text{p}}$ is used to construct an algorithm~$A$ that solves instances with $\phi$-perturbed numbers in expected polynomial time. Algorithm~$A$ first rounds all $\phi$-perturbed coefficients after some number~$b$ of bits after the binary point. Then it uses the algorithm~$A_{\text{p}}$ to solve the rounded instance. One can prove that for~$b=\Theta(\log{n})$ rounding all coefficients does not change the optimal solution with high probability. This is based on the observation that in instances with $\phi$-perturbed numbers usually the best solution is significantly better than the second best solution and hence it stays optimal even after rounding all coefficients (see Exercise~\ref{exer:isolation2}). For~$b=\Theta(\log{n})$ the running time of~$A_{\text{p}}$ to solve the rounded instance optimally is polynomial. This yields an algorithm that always runs
in polynomial time and solves $\phi$-perturbed instances of~$\Pi$ with high probability correctly.
It is possible to adapt this approach to obtain an algorithm that always computes the optimal solution and whose expected running time is polynomial.

\section{Conclusions}

We have proven bounds on the expected number of Pareto-optimal solutions and we have studied the complexity of linear binary optimization problems in the framework of smoothed analysis. Our results are in many cases consistent with empirical observations. The knapsack problem is, for example, easy to solve in applications and has few Pareto-optimal solutions while solving large-scale TSP instances optimally is computationally still expensive despite a lot of progress that has been made in the last decades and great speedups in the common solvers.

The models that we considered in this chapter are very general, in particular because the set $\calS$ of feasible solutions can be arbitrarily chosen, both in Section~\ref{sec:NumberPOSolutions} and in Section~\ref{sec:SmoothedComplexity}. However, this generality is also a drawback of our results because the adversary is still rather powerful and can exactly determine the combinatorial structure of the problem. Often problems are easier in applications than in the worst case because the instances obey certain structural properties. Depending on the problem and application, input graphs might be planar or have small degree, distances might satisfy the triangle inequality etc. Such structural properties are not considered in our general model. Hence, often it is advisable to look in more detail into the instances that are really relevant in applications instead of only assuming that some coefficients are random.

An illustrative experimental study of the multiobjective shortest path problem is due to \citet{Mueller-Hannemann2006}. They consider a graph that is obtained from the daily train schedule of the German railway network and observe that the number of Pareto-optimal train connections in view of travel time, fare, and number of train changes is very small (for no pair of nodes there were more than~$8$ Pareto-optimal connections in the experiments). This is much smaller than suggested by Theorem~\ref{thm:UpperParetoGeneral}. One possible explanation is that in this and many other applications, the objective functions are not independent but to some degree correlated, which might reduce the number of Pareto-optimal solutions. It would be interesting to find a formal model for correlated objective functions that explains the extremely small number of Pareto-optimal solutions observed in this setting.

\section*{Notes}

The Bicriteria Bellman-Ford algorithm was described by \citet{CorleyM85}. The analysis of its running time presented in this chapter can also be found in \citet{BeierPhD}. \citet{BeierV04} initiated the study of the number of Pareto-optimal solutions in the framework of smoothed analysis. The proof of Theorem~\ref{thm:UpperPareto} in this chapter follows an improved and simplified analysis due to \citet{BeierRV07}. This analysis also generalizes the original work of Beier and V\"ocking to integer optimization problems. The bound stated in \citet{BeierRV07} is $O(n^2k^2\log(k)\phi)$ if $\calS\subseteq\{0,\ldots,k\}^n$. It has been improved to $O(n^2k^2\phi)$ by \citet{RoglinR17}.

The results in Section~\ref{sec:SmoothedComplexity} can be found in~\citet{BeierV06}. Theorems~\ref{thm:SmoothedComplexityLower} and~\ref{thm:SmoothedComplexityUpper} do not give a complete characterization of the smoothed complexity of linear binary optimization problems because Theorem~\ref{thm:SmoothedComplexityUpper} does only apply to pseudo-linear and not to general pseudo-polynomial algorithms. Beier and V\"ocking circumvent this problem by introducing a notion of polynomial smoothed complexity that is not based on expected running times (similar to polynomial average-case complexity). Later \citet{RoeglinT09} showed that all problems that can be solved in pseudo-polynomial time in the worst case can be solved in expected polynomial time on $\phi$-perturbed instances, which completes the characterization.

\section*{Exercises}
\begin{enumerate}

\item \label{exer:NUAlgo}
Implement the Nemhauser-Ullmann algorithm so that your implementation achieves a running time of $\Theta(\sum_{i=0}^{n-1}|\mathcal{P}_i|)$.

\item Find an instance of the knapsack problem with~$|\calP_{i+1}| < |\calP_{i}|$ for some~$i$.

\item \label{exer:NumberPareto} Construct instances for the bicriteria shortest path problem with an exponential number of Pareto-optimal $s$-$v$-paths for some vertices~$s$ and~$v$.

\item \label{exer:BellmanFord} Prove that the Bicriteria Bellman-Ford algorithm is correct, i.e., that after termination the list~$L_v$ equals for every vertex~$v\in V$ the set of Pareto-optimal $s$-$v$-paths

\item \label{exer:FloydWarshall} A famous algorithm for the single criterion all-pairs shortest path problem is the Floyd-Warshall algorithm. Adapt this algorithm to the bicriteria all-pairs shortest path problem (given a graph~$G$ with costs and weights, compute for each pair $(u,v)$ of vertices the set of Pareto-optimal $u$-$v$-paths in~$G$). State a bound on its running time in the same fashion as Theorem~\ref{thm:BellmannFordRunningTime}. What is the expected running time if the weights are $\phi$-perturbed?

\item The concept of zero-preserving perturbations could also be applied to the bicriteria case with one adversarial objective function and one linear objective function with $\phi$-perturbed coefficients. Show that, in contrast to the multiobjective case, for bicriteria optimization problems it does not increase the expressibility. For this, show that zero-preserving perturbations for the bicriteria case can be simulated by $\phi$-perturbed coefficients if the set~$\calS$ of feasible solutions is adapted appropriately. Why does this simulation not work for problems with three or more objectives?

\item Prove that the expected number of Pareto-optimal points among $n$ points drawn independently and uniformly at random from the unit square is~$O(\log{n})$.

\item \label{exer:isolation2} Given an instance~$\calI$ of some linear binary optimization problem~$\Pi$ with a set~$\calS\subseteq\{0,1\}^n$ of feasible solutions,
the winner gap is defined as
\[
   \Delta = c\tra x^{\star\star} - c\tra x^\star,
\]
where
\[
   x^\star = \argmin\{c\tra x\mid x\in\calS\}\quad\text{and}\quad
   x^{\star\star} = \argmin\{c\tra x\mid x\in\calS\setminus\{x^\star\}\}
\]
denote the best and second best solution of~$\calI$, respectively.
Let~$\calI$ be an instance of~$\Pi$ with $\phi$-perturbed coefficients~$c_1,\ldots,c_n$.
Prove that, for every~$\eps>0$,
\[
   \Pr[\Delta \le \eps] \le 2n\phi\eps.
\]
\emph{Hint:} This statement follows by similar arguments as Lemma~\ref{lemma:ProbPOInterval}.

\end{enumerate}

\bibliography{Chapter15}
\bibliographystyle{cambridgeauthordate}

\end{document}